\numberwithin{equation}{section}
\newtheorem{theorem}{Theorem}%[section]
\newtheorem{meta-thm}[theorem]{Meta-Theorem}
\newtheorem{lemma}[theorem]{Lemma}
\newtheorem{proposition}[theorem]{Proposition}
\newtheorem{remark}[theorem]{Remark}
\newcommand\beq[1]{ \begin{equation}\label{#1} }
\newcommand{\eeq}{ \end{equation} }
\newcommand\beqa[1]{ \begin{eqnarray} \label{#1}}
\newcommand{\bfm}[1]{\mbox{\boldmath{$#1$}}}
\newcommand{\ds}{\displaystyle}
\newcommand{\eeqa}{ \end{eqnarray} }
\newcommand{\beqano}{ \begin{eqnarray*} }
\newcommand{\eeqano}{ \end{eqnarray*} }
\newcommand\equ[1]{{\rm (\ref{#1})}}
\def\F{{\mathcal F}}
\def\F{{\mathcal F}}
\def\G{{\mathcal G}}
\def\H{{\mathcal H}}
\def\R{{\mathcal R}}
\def\integer{{\mathbb Z}}
\begin{document}

%  Title and author section
\title[Lunisolar disturbing function and critical inclination secular resonance]
{Analytical development of the lunisolar disturbing function and the critical inclination secular resonance}

\author[A. Celletti]{Alessandra Celletti}
\address{
Department of Mathematics, University of Roma Tor Vergata, Via della Ricerca Scientifica 1,
00133 Roma (Italy)}
\email{celletti@mat.uniroma2.it}

\author[C. Gale\c s]{C\u at\u alin Gale\c s}
\address{
Department of Mathematics, Al. I. Cuza University, Bd. Carol I 11,
700506 Iasi (Romania)}
\email{cgales@uaic.ro}

\author[G. Pucacco]{Giuseppe Pucacco}
\address{
Department of Physics, University of Roma Tor Vergata, Via della Ricerca Scientifica 1,
00133 Roma (Italy)}
\email{pucacco@roma2.infn.it}

\author[A. J. Rosengren]{Aaron J. Rosengren}
\address{
IFAC-CNR, Via Madonna del Piano 10, 50019 Sesto Fiorentino (FI) (Italy)}
\email{a.rosengren@ifac.cnr.it}

\date{Received: date / Accepted: date}

% --------------------------------------------------------------------------------------------------------------------------------------
%          ABSTRACT
% --------------------------------------------------------------------------------------------------------------------------------------
\begin{abstract}
We provide a detailed derivation of the analytical expansion of
the lunar and solar disturbing functions. Although there exist
several papers on this topic, many derivations contain mistakes in
the final expansion or rather (just) in the proof, thereby
necessitating a recasting and correction of the original
derivation. In this work, we provide a self-consistent and
definite form of the lunisolar expansion. We start with Kaula's
expansion of the disturbing function in terms of the equatorial
elements of both the perturbed and perturbing bodies. Then we give
a detailed proof of Lane's expansion, in which the elements of the
Moon are referred to the ecliptic plane. Using this approach the
inclination of the Moon becomes nearly constant, while the
argument of perihelion, the longitude of the ascending node, and
the mean anomaly vary linearly with time.

We make a comparison between the different expansions and we
profit from such discussion to point out some mistakes in the
existing literature, which might compromise the correctness of the
results. As an application, we analyze the long--term motion of
the highly elliptical and critically--inclined Molniya orbits
subject to quadrupolar gravitational interactions. The analytical
expansions presented herein are very powerful with respect to
dynamical studies based on Cartesian equations, because they
quickly allow for a more holistic and intuitively understandable
picture of the dynamics.
\end{abstract}

\keywords{Lunisolar perturbations, Disturbing function expansion, Artificial
satellites, Space debris, Critical inclination, Secular resonance}

\maketitle

% --------------------------------------------------------------------------------------------------------------------------------------
%          INTRODUCTION
% --------------------------------------------------------------------------------------------------------------------------------------
\section{Introduction}

The most interesting and long--standing problem of celestial mechanics is that of determining the perturbing effects due to the gravitational force of bodies other than the central mass; more formally known as third--body perturbations. The Moon, its motion around the Earth disturbed by the gravitational attraction of the exceedingly large and relatively near Sun, presented one of the most complex problems within the Solar System. The third--body perturbation problem has also occupied a prominent place in modern celestial mechanics, from the study of the dynamical evolution of stellar and planetary systems to the orbital motions of small bodies and artificial satellites. The analytic methodology for computing and describing these perturbations almost invariably employs a theoretical development of the disturbing function---the negative potential function of the disturbing acceleration \citep{kEcM00,rM13}. The disturbing function plays a fundamental role in celestial mechanics, giving rise to the notion of separation of perturbing effects into periodic and secular variations and the distinction between fast and slow time variables \citep{cMsD99}.

The mathematical development of the lunar and solar effects on the motion of artificial Earth satellites was originally made by \citet{yK59} and \citet{pM59}, who expanded the disturbing function into a series of Legendre polynomials in the ratio of the radial distances (a small quantity for close satellite orbits). A more general and convenient series development was made by \citet{wK62}, whereby the Legendre polynomial is expanded using the addition theorem for spherical harmonics in terms of tesseral harmonics involving polar coordinates referred to the Earth's axis, and further expanded in terms of orbital elements relative to the celestial equator. This harmonic analysis of the perturbations involves a complicated coupling of the motion of the third body (Moon and Sun) with that of the satellite \citep{gC62}. \citet{yK66} noted that the results of these calculations are expressed most simply, in their dependence on the orbital elements of the Moon, when the latter are defined with respect to the plane of the ecliptic. Such a choice of reference planes permits us to consider the lunar inclination as a constant and the lunar argument of perigee and longitude of the ascending node as linear functions of time, provided that we are interested in determining perturbations of the first order only.

The mixed--reference--frame formalism requires a rotation of the
harmonic functions depending on the lunar position \citep{bJ65},
which, while being straightforward and formal in nature, is quite
laborious and error--prone. The first such attempt was made by
\citet{gG74} (see also \cite{gG80}). \citet{sH80}, following this
line of thought, presents a similar expansion, but replaces the
Kaula inclination functions with those of \citet{rA65, rA73} and
\citet{iZ64}. As first noted by \citet{mL89}, however, several
algebraic errors appear in the reckoning work of \citet{gG74}, so
that the form of the lunar series expansion presented therein is
incorrect. Accordingly, \citet{mL89} gives the complete
development showing the tedious and cumbersome nature of the
calculations, but omitting some of the more evident mathematical
details. Despite such a careful and detailed treatment, we
recently discovered some errors in the proof given by Lane - which
we correct here - although the final lunar series expansion given
in \citet{mL89} is indeed correct.

Of course, other series expansion formulations exist, the validity of which have not been questioned, such as those of \citet{rE74} and \citet{dC77}, based instead on the Hill--Brown lunar theory. Nevertheless, the use of  \citetalias{gG74} formalism offers in many respects decided advantages. The form of the lunar series expansion is mathematically pleasing since it displays the influence of all harmonics compactly and elegantly, so that the dissection of the perturbations into short--periodic, long--periodic, and secular parts can be readily made and studied \citep{mL89}. The knowledge of these effects is essential for the determination of the stability of the orbits and the lifetimes of satellites. Of recent practical significance is the investigation of resonant effects on the inclined, nearly circular orbits of the navigation satellites for the management of the Global Navigation Satellite Systems
\citep[qq.v.][]{aR15,jD16}. Moreover, for the investigation of the geosphere and the interplanetary and interstellar space outside of it, highly--eccentric orbits (HEOs) with multi--day periods are being increasingly considered \citep{bE99, dD13}; such orbits are highly susceptible to the effects of lunar and solar perturbations. In order to understand the phase--space structure of MEOs and HEOs and to identify long--term stable regions as well as dynamical pathways that lead to slow, likely chaotic, variations in orbital elements, we must have an accurate representation of the lunar and solar disturbing function expansions.

The aim of this paper will be to discuss the problem anew and give the definitive expansions in the form of \citet{wK62} as in Section~\ref{sec:Kaula_development} and \citet{mL89} as in Section~\ref{sec:Lane_development}. Auxiliary formulae, as well as a considerable part of reckoning work for intermediate calculations, are presented in order to highlight and amend previous mistakes in the literature. An example in satellite dynamics is given in Section~\ref{sec:averages} to illustrate the utility of the results. We consider, in particular, orbits near the critical inclination $63.4^\circ$ in the region of semi--major axes where lunisolar (secular) perturbations generally become more significant than those of higher--order Earth gravity field distributions. Our analysis complements earlier investigations by \citet{mH81} and \citet{fDaM93}, and clarifies the essential role played by the regression of the lunar node in generating orbital chaos, as it was emphasized by \citet{aR15}
and demonstrated in \citet{jD16}. Some conclusions are drawn in Section~\ref{sec:conclusions}.

% --------------------------------------------------------------------------------------------------------------------------------------
%          KAULA
% --------------------------------------------------------------------------------------------------------------------------------------
\section{Kaula's development of the disturbing function} \label{sec:Kaula_development}

The purpose of this Section is to recall the expansions of the solar and lunar disturbing functions for an artificial satellite (or a space debris) in terms of the equatorial elements of both the perturbed and perturbing bodies. Such development of the gravitational effects of the Sun and Moon on a close--Earth object was derived by \citet{wK62} with the intention of including the lunisolar perturbations as well as the perturbations due to the geopotential \citep[see][]{wK66} in the equations of variation of the orbital elements. Furthermore, due to its compact expression, the expansion also proves to be easy to implement in an algebraic manipulator and useful in evaluating various dynamical effects (e.g., secular resonances, the increase of eccentricity of satellite orbits, etc.)

Following \citet{wK62}, the gravitational potential due to a third--body perturber (either Sun or Moon),
expanded as a function of all orbital elements relative to the celestial equator, has the expression:
\beqa{Rsun}
\R^{*}&=&\G m^{*}\sum_{l=2}^{\infty}\sum_{m=0}^l \sum_{p=0}^l \sum_{h=0}^l \sum_{q=-\infty}^\infty \sum_{j=-\infty}^\infty {a^l\over (a^{*})^{l+1}}
\ \epsilon_m\, {{(l-m)!}\over {(l+m)!}}\nonumber\\
&&\times\ \F_{lmph}(I,I^*) H_{lpq}(e)\, G_{lhj}(e^{*})\ \cos(\varphi_{lmphqj})\ ,
\eeqa
where
\beqano
\F_{lmph}(I,I^*)&\equiv&F_{lmp}(I)\ F_{lmh}(I^{*})\ ,\nonumber\\
\varphi_{lmphqj}&\equiv& (l-2p)\omega+(l-2p+q)M-(l-2h)\omega^{*}-(l-2h +j)M^{*}+m(\Omega-\Omega^{*})
\eeqano
with $\G$ the gravitational constant; $m^*$ the mass of the disturbing body; $a$, $e$, $I$, $\omega$, $\Omega$ and $M$ the satellite's orbital elements\footnote{Using standard notation, $a$ denotes the semi--major axis, $e$ the eccentricity, $I$ the inclination, $\omega$ the argument of the perigee, $\Omega$ the longitude of the ascending node, and $M$ the mean anomaly.}; $a^*$, $e^*$, $I^*$, $\omega^*$, $\Omega^*$ and $M^*$ the corresponding elements of the  third--body perturber; the quantity $\epsilon_m$ is defined by
\beq{epsm}
    \epsilon_m = \left\{ \begin{array}{cl} 1 & \text{if } m = 0\ , \\ 2 & \text{if } m \in\integer\backslash\{0\}\ , \end{array} \right.
\eeq
the functions $H_{lpq}(e)$ and $G_{lhj}(e^{*})$ are the Hansen coefficients $X_{l-2p+q}^{l, l-2p}(e)$,
$X_{l-2h+j}^{-(l+1), l-2h}(e^*)$ (see Appendix~\ref{app:Hansen}); the terms $F_{lmp}(I)$ and $F_{lmh}(I^{*})$ are the Kaula's inclination functions (see Appendix~\ref{app:inclination_functions}).

Expression \eqref{Rsun} for the disturbing function may be compared with that given by \citet{cMsD99}. In this standard text on Solar system dynamics, the disturbing function for an inner secondary, which arises from the outer secondary mass perturbation potential, is similarly expanded in an infinite series in the osculating elements referred to the equator of the primary
with $\R^*$ as in \equ{Rsun}, where
\beqa{eq:murray_dermott}
\F_{lmph}(I,I^*)&\equiv&\bar{F}_{lmp}(I)\ \bar{F}_{lmh}(I^{*})\ ,\nonumber\\
\varphi_{lmphqj}&\equiv& (l-2p+q)\lambda-q\varpi+(m-l+2p)\Omega-(l-2h+j)\lambda^{*}\nonumber\\
&+&j\varpi^{*}-(m-l+2h)\Omega^{*}\ ,
\eeqa
where $\lambda$ and $\lambda^{*}$ are the mean longitudes, $\varpi$ and $\varpi^{*}$ are the longitudes of pericenter, but in this case the quantities $\bar{F}_{lmp}(I)$ and $\bar{F}_{lmh}(I^{*})$ represent the modified Allan inclination functions, related to the corresponding quantities used by Kaula according to\footnote{The history of the inclination functions and a discussion of the vexing issues of notation are given by \citet{rGcW08}.}
$$
    \bar{F}_{lmp} (I)
    = \left\{ \begin{array}{cl} F_{lmp} (I), & (l - m) \text{ even}\ ,
        \\[1em] -i F_{lmp} (I), & (l - m) \text{ odd}\ . \end{array} \right.
$$
Thus, Allan's differ from Kaula's functions by a factor of $i = \sqrt{-1}$ when $l - m$ is odd. \citet{kEcM00} give the disturbing function in a similar form as \eqref{eq:murray_dermott}, again using the mean longitudes and longitude of pericenters as the angular quantities over the argument of pericenters and mean anomalies, but instead using the Kaula inclination functions.

The variation of the Sun's orbital elements with respect to the celestial equator
are well approximated by linear functions of time, therefore the expansion \eqref{Rsun}
can be successfully applied in modeling the solar perturbations. However, as far as the lunar disturbing function is concerned, as noted in various works \citep[see][]{gC62,yK66}, the Moon's inclination, node, and argument of perigee are not simple functions of time. In fact, given that the main perturbing effect is due to the Sun, the variation of the above mentioned lunar elements with respect to the celestial equator is nonlinear. In particular, the longitude of the ascending node varies between $-13^\circ$
and $13^\circ$ with a period of 18.6 years. Within the same interval, the inclination of the lunar orbit with respect to the celestial equator oscillates between $18.4^\circ$ and $28.6^\circ$. The change in the argument of perigee is also nonlinear.

On the contrary, if we consider the elements of the Moon with respect to the ecliptic plane, then the inclination is approximately constant, while the variations of the argument of perigee and the longitude of the ascending node are approximately linear. This remark motivates the introduction of a different approach. Precisely, following \citet{gG74}, \citet{sH80}, and \citet{mL89}, it is convenient to introduce a rotation of the spherical harmonics for the Moon, so that its orbital elements are referred to the ecliptic plane, while the orbital elements of the satellite (or space debris) remain unchanged, that is,
they are referred to the equatorial plane. This alternative approach is the content of Section~\ref{sec:Lane_development}.

% --------------------------------------------------------------------------------------------------------------------------------------
%          GIACAGLIA AND LANE
% --------------------------------------------------------------------------------------------------------------------------------------
\section{Giacaglia's and Lane's lunar disturbing function expansions} \label{sec:Lane_development}

%{\footnotesize {\color{red}
%\section*{notes}
%\begin{itemize}
%    \item expansion of Lane with equator/ecliptic elements, but with U-Giacaglia and Kaula inclination function (proposition/proof format?)
%\end{itemize}
%}}

%  ****WE SHOULD BE CAREFUL WITH TERMINOLOGY (THE DISTURBING FUNCTION IS THE NEGATIVE POTENTIAL FUNCTION)
Let us consider a reference frame centered in the Earth and a
material point (e.g., a satellite or space debris) orbiting around the Earth. The gravitational action
of a third body (e.g., the Moon or the Sun) provokes a potential
given by the disturbing function
\beq{eq:potential}
\mathcal{R}_k= \frac{\mathcal{G} m_k}{\rho_k} - \frac{\mathcal{G} m_k ({\bfm r}_k \cdot {\bfm r})}{r_k^3}\ ,
\eeq
where $m_k$ is the
mass of the third body, ${\bfm r}$ is the position vector of the
point mass, ${\bfm r}_k$ is the position vector of the third body,
and $\rho_k = \lvert {\bfm r} - {\bfm r}_k \rvert$. Expanding
\eqref{eq:potential} in Legendre polynomials, one obtains
$$
    \mathcal{R}_k
    \label{eq:legendre}
    = \frac{\mathcal{G} m_k}{r_k} \sum\limits_{l \geq 2}
        \left( \frac{r}{r_k} \right)^l P_l \left( \cos \psi_k \right)\ ,
$$
where $\psi_k$ is the ``geocentric elongation'' of the point mass
from the third body, i.e., $r_k\ r \cos \psi_k = {\bfm r}_k \cdot
{\bfm r}$.

Assuming that the Moon is the third body, as mentioned before, it is convenient to
express the position of the Moon in the \sl ecliptic \rm frame, so that
the inclination $I_k$ becomes nearly constant, while the argument of
perihelion $\omega_k$, the longitude of the ascending node
$\Omega_k$, and the mean anomaly $M_k$ vary almost linearly with
time, with rates respectively equal to $0.164^\circ$/day,
$-0.053^\circ$/day, and $13.06^\circ$/day. This remark suggests that it is convenient
to express the elements of the point mass with respect to the
celestial equator and the elements of the Moon with respect to the
ecliptic plane \citep[compare with][]{yK66,gG74,mL89,sH80}.

\vskip.1in

%  ****REDUNDANT
%  It is important to notice that, when referred to the celestial equator, the quantity $\Omega_\text{Moon}$ varies in the interval $[-13^\circ, 13^\circ]$ with a period of about 18.6 years; the variation of $\omega_\text{Moon}$ is nonlinear; the inclination $I_\text{Moon}$ oscillates between $18.4^\circ$ and $28.6^\circ$ with a period of 18.6 years \citep{gC62,yK66}.

%  ****NOTE THE CHANGE IN ORDER HERE
The aim of this Section is to prove that the disturbing function $\mathcal{R}_k$ in \equ{eq:potential} can be expanded as in Proposition~\ref{pro:Lane} below, where the elements $I_k$, $M_k$, $\omega_k$, $\Omega_k$ of the Moon are referred to the ecliptic frame. The following expansion of the potential induced by the Moon
follows closely \citet{mL89}.

To this end, we premise the following result. Let $P_l^m(\cdot)$ be the associated Legendre functions of
degree $l$ and order $m$; let $a$, $a_k$ be the semi--major axes of the point mass and
the Moon, respectively; let $(\alpha,\delta)$, $(\alpha^\prime,\delta^\prime)$ be the
right ascension and declination with respect to the equator of the point mass and the Moon, respectively.
Then, using the spherical harmonic addition theorem, the expansion \equ{eq:potential} can be written as
\begin{align}
    \mathcal{R}_k
    \label{eq:legendre_add}
    = \frac{\mathcal{G} m_k}{a_k} \sum\limits_{l \geq 2} \sum\limits_{m=0}^l
        \frac{\epsilon_m (l - m)!}{(l + m)!} \left( \frac{a}{a_k} \right)^l \left( \frac{r}{a} \right)^l
        \left( \frac{a_k}{r_k} \right)^{l+1} P_l^m \left( \sin \delta \right) P_l^m \left( \sin \delta^\prime \right)
        \cos \left( m (\alpha - \alpha^\prime) \right)\ ,
\end{align}
where the quantity $\epsilon_m$ is defined by the relation \eqref{epsm}.

\begin{proposition}\label{pro:Lane}
Let $F_{lmp}(I)$ and $F_{lsq} (I_k)$ be the Kaula's inclination functions (see Appendix~\ref{app:inclination_functions})
with the inclination $I$ referred to the celestial equator and the inclination $I_k$ referred to the ecliptic, and let $X_r^{n,m}(e)$ denote the Hansen coefficients (see Appendix~\ref{app:Hansen}). Let the quantities $\epsilon_m$ be defined as in
\equ{epsm}.

Let us introduce the quantities $\bar{\theta}_{lmpj}$, $\bar{\theta}_{lsqr}^\prime$ as
\begin{align}
\label{eq:Theta_bar}
\begin{array}{rcl}
    \bar{\theta}_{lmpj} & = & (l - 2p) \omega + (l - 2p + j) M + m \Omega\ , \\[0.5em]
    \bar{\theta}_{lsqr}^\prime & = & (l - 2q) \omega_k + (l - 2q + r) M_k + s (\Omega_k - \pi/2)\\[0.5em]
\end{array}
\end{align}
and let the functions $U_l^{m,s}$ be defined as

\beqa{ULane}
U_l^{m,s}&=&\sum_{r =\max(0,-(m+s))}^{\min(l-s,l-m)} (-1)^{l-m-r}
\left(\begin{array}{c}
  l+m \\
  m+s+r \\
\end{array}\right)\
\left(\begin{array}{c}
  l-m \\
  r \\
\end{array}\right)\
\cos^{m+s+2r}({\varepsilon\over 2})\sin^{-m-s+2(l-r)}({\varepsilon\over 2})\ ,\nonumber\\
\eeqa
where $\epsilon$ is the ecliptic inclination.

Then, we have that the potential $\R_k$ in \equ{eq:potential} can be expanded as
\beqa{Rgood}
    \mathcal{R}_k
    \nonumber
    & = &\sum\limits_{l \geq 2} \sum\limits_{m = 0}^l \sum\limits_{p = 0}^l \sum\limits_{s = 0}^l
        \sum\limits_{q = 0}^l \sum\limits_{j = -\infty}^{+\infty} \sum\limits_{r = -\infty}^{+\infty}
        (-1)^{m+s}\ (-1)^{k_1} \frac{\mathcal{G} m_k \epsilon_m \epsilon_s}{2 a_k} \frac{(l - s)!}{(l + m)!}
        \left( \frac{a}{a_k} \right)^l \\
    \nonumber
    &\times& F_{lmp} (I) F_{lsq} (I_k)
        H_{lpj} (e) G_{lqr} (e_k) \\
    \label{eq:lane}
    &\times& \left\{ (-1)^{k_2} U_l^{m, -s}
        \cos \left( \bar{\theta}_{lmpj} + \bar{\theta}_{lsqr}^\prime - y_s \pi \right)
        + (-1)^{k_3} U_l^{m, s} \cos \left( \bar{\theta}_{lmpj} - \bar{\theta}_{lsqr}^\prime - y_s \pi \right)  \right\}\ ,\nonumber\\
\eeqa
where $y_s=0$ for $s$ even and $y_s=1/2$ when $s$ is odd, $k_1=[m/2]$, $k_2 = t (m + s - 1) + 1$, $k_3 = t (m + s)$
with $t=(l-1)$ mod 2.
\end{proposition}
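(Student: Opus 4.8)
The plan is to begin from the addition-theorem form \eqref{eq:legendre_add} and to convert the satellite factor and the lunar factor into orbital elements \emph{separately}, the one genuine subtlety being that the two bodies are referred to different reference planes. First I would linearize the azimuthal cosine via $2\cos\!\big(m(\alpha-\alpha')\big)=e^{im(\alpha-\alpha')}+e^{-im(\alpha-\alpha')}$, so that each summand factors into an equatorial satellite harmonic $(r/a)^l P_l^m(\sin\delta)\,e^{\pm i m\alpha}$ times a lunar harmonic $(a_k/r_k)^{l+1}P_l^m(\sin\delta')\,e^{\mp i m\alpha'}$, the two being then expandable independently.

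For the satellite factor, already expressed in equatorial quantities, I would invoke Kaula's classical transformation of a solid spherical harmonic into Keplerian elements. This produces the inclination function $F_{lmp}(I)$, the Hansen coefficient $H_{lpj}(e)=X^{l,\,l-2p}_{l-2p+j}(e)$, and the phase $\bar\theta_{lmpj}=(l-2p)\omega+(l-2p+j)M+m\Omega$ of \eqref{eq:Theta_bar}, summed over $p$ and $j$.

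The crux is the lunar factor, whose declination $\delta'$ and right ascension $\alpha'$ are still equatorial, whereas we wish to expand in \emph{ecliptic} elements. Here I would rotate the lunar harmonic through the obliquity $\varepsilon$ about the equinox line. Decomposing this $x$-axis rotation in the standard $z$-$y$-$z$ convention, the order-$m$ equatorial harmonic becomes a finite sum over orders $s$ of ecliptic harmonics, with coefficients of the form $e^{\mp i m\pi/2}\,d^l_{ms}(\varepsilon)\,e^{\pm i s\pi/2}$; the small-$d$ factor is exactly the quantity $U_l^{m,s}$ of \eqref{ULane}, up to the factorial normalization $(l-s)!/(l+m)!$ seen in the final formula, and the summation range stated for $r$ is precisely the one that keeps every binomial argument nonnegative. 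The $s$-dependent phase $e^{\pm i s\pi/2}$ is what generates the $-\pi/2$ node shift inside $\bar\theta'_{lsqr}$ and, through the parity of $s$, the half-integer offsets $y_s$; the $m$-dependent phase $e^{\mp i m\pi/2}$ will contract, after the real-part extraction below, to the factor $(-1)^{k_1}$ with $k_1=[m/2]$. A second application of Kaula's transformation, now to the order-$s$ lunar harmonic in the ecliptic frame, delivers $F_{lsq}(I_k)$, the Hansen coefficient $G_{lqr}(e_k)=X^{-(l+1),\,l-2q}_{l-2q+r}(e_k)$, and the remaining phase $\bar\theta'_{lsqr}$ of \eqref{eq:Theta_bar}, summed over $q$ and $r$.

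Finally I would reassemble the two expansions, sum over the rotation index $s$ across both signs, and take the real part. Combining the $+s$ and $-s$ contributions and using the parity symmetry of $U_l^{m,s}$ yields exactly the two cosine terms of \eqref{eq:lane}: one carrying $U_l^{m,-s}$ with the phase \emph{sum} $\bar\theta_{lmpj}+\bar\theta'_{lsqr}$, the other carrying $U_l^{m,s}$ with the phase \emph{difference}. The principal obstacle --- and exactly where the derivations of \citetalias{gG74} and \citetalias{mL89} are reported to have gone astray --- is the exhaustive bookkeeping of the residual signs: one must check that $(-1)^{k_2}$ and $(-1)^{k_3}$ with $k_2=t(m+s-1)+1$, $k_3=t(m+s)$, and $t=(l-1)\bmod 2$ emerge correctly from the interplay of the internal $(-1)$'s of the Wigner coefficient, the parity of $l$ entering through $(r/a)^l$ versus $(a_k/r_k)^{l+1}$, and the $\epsilon_m,\epsilon_s$ multiplicities that handle the $m=0$ and $s=0$ cases. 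This last step is purely computational but delicate, and is where the bulk of the care must be invested.
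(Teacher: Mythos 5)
Your proposal follows essentially the same route as the paper's own proof: the addition-theorem form \eqref{eq:legendre_add}, Kaula's transformation of the satellite harmonics, the Jeffreys--Giacaglia rotation of the lunar harmonics through the obliquity (your Wigner coefficients $e^{\mp i m\pi/2}d^l_{ms}(\varepsilon)e^{\pm i s\pi/2}$ are exactly the paper's $\Lambda_l^{m,s}=\frac{(l-s)!}{(l-m)!}\,e^{i(m-s)\pi/2}U_l^{m,s}$ of Lemma~\ref{lem:plm}), a second application of Kaula's transformation in the ecliptic frame, and the Hansen-coefficient conversion from true to mean anomalies. The only organizational difference is that you carry complex exponentials throughout, whereas the paper works with the real quantities $C_l^m$, $S_l^m$ and settles the residual signs $(-1)^{k_1}$, $(-1)^{k_2}$, $(-1)^{k_3}$, $y_s$ by the explicit parity case analysis of Lemmas~\ref{lem:theta} and~\ref{lem:theta2} (Appendices~\ref{app:Lemma7} and~\ref{app:Lemma8}) --- precisely the delicate bookkeeping you correctly identify as the crux but leave unexecuted.
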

%  ****THE QUANTITIES K_1, K_2, AND K_3 ARE NOT DEFINED HERE, BUT K_1 IN REMARK 6 AND THE OTHERS IN LEMMA 8; CONSIDER MOVING HERE FOR COMPLETENESS.

The proof of Proposition~\ref{pro:Lane} will be given at the end of this Section; we need
first some auxiliary results.

\begin{remark}
$(i)$ Notice that an alternative expression of the functions $U_l^{m,s}$ defined in \eqref{ULane} is the following
(compare with \citet{gG74}):
\begin{align}
    U_l^{m,s}
    \label{eq:U_case_greater}
    & = (-1)^{l - m} \left( \begin{array}{c} l + m \\ l - s \end{array} \right)
        \left( \cos \frac{\epsilon}{2} \right)^{m + s} \left( \sin \frac{\epsilon}{2} \right)^{s - m}
        F \left( -l + s, l + s + 1, m + s + 1; \cos^2 \frac{\epsilon}{2} \right)
\intertext{for $m + s \geq 0$, and}
    U_l^{m,s}
    \label{eq:U_case_less}
    & = (-1)^{l - s} \left( \begin{array}{c} l - m \\ l + s \end{array} \right)
        \left( \cos \frac{\epsilon}{2} \right)^{-m - s} \left( \sin \frac{\epsilon}{2} \right)^{m - s}
        F \left( -l - s, l - s + 1, -m - s + 1; \cos^2 \frac{\epsilon}{2} \right)
\end{align}
for $m + s < 0$, where the hypergeometric series $F = \tensor[_2]{F}{_1}$ is
defined by
\begin{align*}
    F (a, b, c; x) = \sum\limits_{n=0}^{+\infty} \frac{(a)_n (b)_n}{(c)_n} \frac{x^n}{n!}\ ,
\end{align*}
in which we used the Pochhammer symbol $(a)_0 = 1, (a)_n = a (a + 1)
\ldots (a + n - 1)$.
We observe that the expressions \equ{eq:U_case_greater},
\equ{eq:U_case_less} of the functions $U_l^{m,s}$ in terms of the hypergeometric
series are attributed to Jacobi in \citet{CH}.

$(ii)$ The final expansion for $\mathcal{R}_k$ in \equ{Rgood}
coincides with that given by \citet{mL89}, although the proof in \citet{mL89} is not completely correct.
To mention one point, the relation \equ{eq:sphere_harmon_rot} of Lemma~\ref{lem:plm} below appears
at p. 290 of \citet{mL89}; however, the functions $\Lambda_l^{m,s}$ appearing in \equ{lambda}
below are not properly defined in \citet{mL89}, unless one takes the definition \equ{ULane}
for the functions $U_l^{m,s}$, which in \citet{mL89} are multiplied by the factor $(-1)^{m-s}$.
\end{remark}

\vskip.1in

We introduce the quantities $C_l^m$, $S_l^m$ defined as
\begin{align*}
    C_l^m \equiv A_l^m \cos m \alpha^\prime, \quad S_l^m \equiv A_l^m \sin m \alpha^\prime\ ,
\end{align*}
where $A_l^m$ is given by
\begin{align*}
    A_l^m \equiv \frac{\mathcal{G} m_k \epsilon_m (l - m)!}{a_k (l + m)!} \left( \frac{a}{a_k} \right)^l
        \left( \frac{r}{a} \right)^l \left( \frac{a_k}{r_k} \right)^{l + 1} P_l^m \left( \sin \delta^\prime \right)\ .
\end{align*}
With this setting we can write \eqref{eq:legendre_add} as
\begin{align}
    \mathcal{R}_k
    \label{eq:kaula_tmp}
    = \sum\limits_{l \geq 2} \sum\limits_{m=0}^l P_l^m \left( \sin \delta \right)
        \left( C_l^m \cos m \alpha + S_l^m \sin m \alpha \right)\ .
\end{align}
According to \citet[][p. 31, equation (3.53) with $P_{lm}$ replaced by $P_l^m$ and
p. 34, equation (3.61)]{wK66},\footnote{Notice that $P_l^m(\sin\delta)=(-1)^m\,P_{lm}(\sin\delta)$.} one can rewrite \eqref{eq:kaula_tmp} as
\begin{align}
    \mathcal{R}_k
    \nonumber
    & = \sum\limits_{l \geq 2} \sum\limits_{m=0}^l \sum\limits_{p=0}^l (-1)^m\ F_{lmp} (I)
        \left\{ \left[ \begin{array}{c} C_l^m \\[0.5em] -S_l^m
        \end{array} \right]_{l - m \text{ odd}}^{l - m \text{ even}}
        \cos \left( (l - 2p) (\omega + f) + m \Omega \right) \right. \\
    \label{eq:kaula}
    & \hspace{12pt} + \left. \left[ \begin{array}{c} S_l^m \\[0.5em] C_l^m
        \end{array} \right]_{l - m \text{ odd}}^{l - m \text{ even}}
        \sin \left( (l - 2p) (\omega + f) + m \Omega \right) \right\}\ ,
\end{align}
where $f$, $\omega$, $\Omega$ are, respectively, the true anomaly,
the argument of perigee, the longitude of the ascending node of
the point mass, while the function $F_{lmp} (I)$ is the
Kaula's inclination function defined in Appendix~\ref{app:inclination_functions}.
%  ****ALREADY DEFINED IN APPENDIX B
% defined as \citet[][p. 34, equation
%(3.62)]{wK66}
%\begin{align}
%    F_{lmp} (I)
%    \nonumber
%    & = \sum\limits_{t=0}^{\text{min}(p,k)} \frac{(2l - 2t)!}{t! (l - t)! (l - m - 2t)! 2^{2l - 2t}}
%        \sin^{l - m - 2t} (I) \sum\limits_{s=0}^m \left( \begin{array}{c} m \\ s \end{array} \right) \cos^s I \\
%    & \hspace{12pt} \times \sum\limits_c \left( \begin{array}{c} l - m - 2t + s \\ c \end{array} \right)
%        \left( \begin{array}{c} m - s \\ p - t - c \end{array} \right) (-1)^{c - k}\ ,
%\end{align}
%in which $I$ is the inclination of the point mass, $k = [{(l -
%m)/2}]$, where $[{\cdot}]$ is the \textit{entier} or floor
%operator, $c$ is summed over all values for which the binomial
%coefficients are nonzero.

\vskip.1in

For short, we denote by
$$
    \theta_{lmp} \equiv (l - 2p) (\omega + f) + m \Omega\ .
$$
Since we aim to have the elements of the Moon with respect to the
ecliptic plane, we need to transform the spherical harmonics
through a rotation.
To this end, we recall a result due to \citet{bJ65},
which is of crucial importance for the proof of the main result, Proposition~\ref{pro:Lane}.

\begin{lemma}\label{lem:plm}
Let $(\delta_k, \alpha_k)$ be the
ecliptic latitude and longitude\footnote{In \citet{mL89}
$\delta_k$, $\alpha_k$ denote the ecliptic declination and right
ascension of the Moon.} of the Moon. Let us introduce the quantities
$\Lambda_l^{m,s}$ defined as
\beq{lambda}
    \Lambda_l^{m,s}
    = \frac{(l - s)!}{(l - m)!} e^{i (m - s) \pi/2} U_l^{m,s}\ .
\eeq
Then, we have the following relation:
\beq{eq:sphere_harmon_rot}
    P_l^m \left( \sin \delta^\prime \right) e^{i m \alpha^\prime}
    = \sum\limits_{s = -l}^l \Lambda_l^{m,s} P_l^s (\sin \delta_k) e^{i s \alpha_k}\ .
\eeq
\end{lemma}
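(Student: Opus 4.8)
The plan is to recognize \eqref{eq:sphere_harmon_rot} as the classical transformation law of surface spherical harmonics under the rotation that carries the ecliptic frame into the equatorial frame. Both the right ascension $\alpha'$ and the ecliptic longitude $\alpha_k$ are reckoned from the common line of nodes (the vernal equinox), so the two frames differ by a single rotation through the obliquity $\varepsilon$ about that line. Since $r^l P_l^m(\sin\delta)\,e^{im\alpha}$ is, up to a normalization constant, a solid harmonic of degree $l$, and since the space of harmonic polynomials of degree $l$ is invariant under rotations, the left-hand side of \eqref{eq:sphere_harmon_rot} must be a linear combination of the degree-$l$ ecliptic harmonics $P_l^s(\sin\delta_k)\,e^{is\alpha_k}$, $-l\le s\le l$. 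The entire content of the lemma (a result I would attribute, following \citet{bJ65}, to Jacobi) is therefore the explicit evaluation of the coefficients $\Lambda_l^{m,s}$, and the real work lies in pinning down their phases, signs, and factorial normalizations.

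I would make this explicit through a generating function for the solid harmonics. Writing $\eta=x+iy$, $\bar\eta=x-iy$, $\zeta=z$ in Cartesian equatorial coordinates, the null vector $\mathbf a(t)=\big(\tfrac{1-t^2}{2},\,\tfrac{i(1+t^2)}{2},\,t\big)$ satisfies $\mathbf a(t)\cdot\mathbf a(t)=0$, so $\big(\mathbf a(t)\cdot\mathbf r\big)^l=\big(\tfrac{\eta}{2}+\zeta t-\tfrac{\bar\eta}{2}t^2\big)^l$ is harmonic and homogeneous of degree $l$; expanding in powers of $t$ reproduces $r^lP_l^m(\sin\delta)\,e^{im\alpha}$ times explicit factorial constants $N_{lm}$, with the $t^{\,l-m}$ coefficient carrying the order-$m$ harmonic (these constants are fixed once and for all by comparison with the Rodrigues definition, e.g. by evaluating at the pole where only $m=0$ survives). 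The crucial observation is that the frame change acts on the generating variable by a Möbius transformation: if $\mathbf r'=R\,\mathbf r_k$ with $R$ the rotation by $\varepsilon$ about the $x$-axis, then $R^{-1}\mathbf a(t)$ is again a null vector, hence equals $\lambda(t)\,\mathbf a\big(\tau(t)\big)$ for a scalar $\lambda(t)$ and a fractional-linear map $\tau(t)$ determined by $\varepsilon$. Consequently $\big(\mathbf a(t)\cdot\mathbf r'\big)^l=\lambda(t)^l\big(\mathbf a(\tau(t))\cdot\mathbf r_k\big)^l$.

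The coefficients then drop out by expanding both sides in $t$. Because the $x$-rotation mixes the $\zeta$ direction with the $i$-weighted $\eta$ direction, the coefficients of $\lambda(t)$ and $\tau(t)$ are built from $\cos\tfrac{\varepsilon}{2}$, $\sin\tfrac{\varepsilon}{2}$ and factors of $i$; extracting the coefficient of $t^{\,l-m}$ on the left and matching it against $\lambda(t)^l\tau(t)^{\,l-s}$ on the right produces, via the binomial theorem, exactly a double-binomial sum with trigonometric weights $\cos^{m+s+2r}(\varepsilon/2)\,\sin^{2l-2r-m-s}(\varepsilon/2)$, that is, the function $U_l^{m,s}$ of \eqref{ULane}. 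The accumulated powers of $i$ assemble into the phase $e^{i(m-s)\pi/2}$, and the ratio of the normalization constants $N_{lm},N_{ls}$ supplies the factorial prefactor $\tfrac{(l-s)!}{(l-m)!}$ appearing in \eqref{lambda}. As a useful cross-check one may instead decompose the rotation about the node line as a $y$-rotation conjugated by quarter-turns about the polar axis, $R_x(\varepsilon)=R_z(\pm\pi/2)\,R_y(\varepsilon)\,R_z(\mp\pi/2)$: the two $z$-rotations contribute precisely the phase $e^{i(m-s)\pi/2}$, while $R_y(\varepsilon)$ contributes the Wigner reduced rotation matrix $d^l_{ms}(\varepsilon)$, which coincides with $U_l^{m,s}$ up to the same factorial normalization and is nothing but the hypergeometric/Jacobi form recorded in \eqref{eq:U_case_greater}--\eqref{eq:U_case_less}.

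I expect the genuine obstacle to be purely the bookkeeping of signs and phases rather than any conceptual step: one must track the Condon--Shortley factor $(-1)^m$ hidden in $P_l^m$, the factors of $i$ generated by the $x$-rotation, and the sense (sign of $\varepsilon$) of the frame change, and reconcile all of them against the definition \eqref{ULane}. This is exactly the delicate point highlighted in the Remark, where the correct $U_l^{m,s}$ differs from the expression in \citet{mL89} by a factor $(-1)^{m-s}$; verifying that the sign $(-1)^{l-m-r}$ in \eqref{ULane} (and not $(-1)^{l-s-r}$) is the one consistent with the generating-function output is precisely what fixes the overall convention and yields \eqref{eq:sphere_harmon_rot} as stated.
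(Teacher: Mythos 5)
Your proposal is correct, but it computes the coefficients by a genuinely different route from the paper's. The two arguments share only the opening move: since the degree-$l$ harmonics form a rotation-invariant space and the two frames differ by a rotation through $\epsilon$ about the common node line, $P_l^m(\sin\delta')\,e^{im\alpha'}$ must equal $\sum_{s=-l}^{l}\Lambda_l^{m,s}\,P_l^s(\sin\delta_k)\,e^{is\alpha_k}$ for coefficients depending only on $\epsilon$. From there the paper, following \citet{gG74}, extracts each $\Lambda_l^{m,s}$ by orthogonality of the spherical harmonics, arriving at the integral representation \equ{eq:Lambda_orthog}, and then simply asserts that inserting the coordinate relations \equ{eq:delta_alpha_relations} and carrying out the integral yields \equ{lambda}; the actual evaluation --- which is where all the sign and phase subtleties live --- is not displayed. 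You instead evaluate the coefficients algebraically: the null-vector generating function $\left(\mathbf{a}(t)\cdot\mathbf{r}\right)^l$ packages the degree-$l$ solid harmonics, the obliquity rotation acts on the parameter $t$ by a M\"obius map, and matching the coefficient of $t^{l-m}$ against the binomial expansion of $\lambda(t)^l\,\tau(t)^{l-s}$ produces the double-binomial sum \equ{ULane} directly; your decomposition $R_x(\varepsilon)=R_z(\pm\pi/2)\,R_y(\varepsilon)\,R_z(\mp\pi/2)$ then exhibits the phase $e^{i(m-s)\pi/2}$ in \equ{lambda} as the contribution of the two quarter-turns, with $R_y(\varepsilon)$ supplying the Wigner-type reduced coefficient $U_l^{m,s}$ (equivalently the Jacobi/hypergeometric forms \equ{eq:U_case_greater}--\equ{eq:U_case_less}) and the factorial ratio $(l-s)!/(l-m)!$ accounting for the unnormalized $P_l^m$. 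The trade-off: the paper's orthogonality route needs no machinery beyond the addition theorem already in play, but it buries the delicate bookkeeping inside an unevaluated double integral --- precisely the step where the sign errors discussed in the surrounding remarks crept into the earlier literature --- whereas your route replaces integration by finite binomial algebra and makes the provenance of the phase, of the factorials, and of the sign $(-1)^{l-m-r}$ structurally transparent, at the price of fixing the generating-function normalization constants and the sense of the frame rotation once and for all.
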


\begin{proof}
We want to express $P_l^m(\sin \delta^\prime) e^{i m \alpha^\prime}$ in terms of
$(\delta_k, \alpha_k)$ by using the relations
\begin{align}
    \label{eq:delta_alpha_relations}
    \begin{array}{rcl} \cos \delta^\prime e^{i \alpha^\prime} & = & \cos \delta_k \cos \alpha_k
        + i \left( \cos \delta_k \sin \alpha_k \cos \epsilon - \sin \delta_k \sin \epsilon \right) \\[0.5em]
    \sin \delta^\prime & = & \cos \delta_k \sin \alpha_k \sin \epsilon
        + \sin \delta_k \cos \epsilon\ . \end{array}
\end{align}
Using the properties of spherical harmonics under a rotation,
let us write the quantity $P_l^m \left( \sin \delta^\prime \right) e^{i m \alpha^\prime}$ as
$$
    P_l^m \left( \sin \delta^\prime \right) e^{i m \alpha^\prime}
    = \sum\limits_{s = -l}^l \Lambda_l^{m,s} P_l^s (\sin \delta_k) e^{i s \alpha_k}\ ,
$$
where $\Lambda_l^{m,s} = \Lambda_l^{m,s} (\epsilon)$
are suitable  functions that depend on
the ecliptic inclination $\epsilon$. According to
\citet{gG74}, the expression of $\Lambda_l^{m,s}$ is obtained as
follows. From the previous expression  and using the
orthogonality condition of the functions $P_l^m$, one finds that
\begin{align}
    \Lambda_l^{m,s}
    \label{eq:Lambda_orthog}
    = \frac{\epsilon_s (2 l + 1)}{4 \pi} \frac{(l - s)!}{(l + s)!}
        \int\limits_{-\pi/2}^{\pi/2} \cos \delta_k\, \mathrm{d} \delta_k
        \int\limits_0^{2 \pi} P_l^m (\sin \delta^\prime) e^{i m \alpha^\prime}
        P_l^s (\sin \delta_k) e^{i s \alpha_k}\, \mathrm{d} \alpha_k\ .
\end{align}
Inserting \eqref{eq:delta_alpha_relations} in
\eqref{eq:Lambda_orthog} and making the integral one obtains that the functions $\Lambda_l^{m,s}$
are given as in \equ{lambda}.
\end{proof}

We now need another auxiliary result to transform the term $C_l^m\cos m\alpha+S_l^m\sin m\alpha$ in
\equ{eq:kaula_tmp}.

\begin{lemma}{[\citet[][p. 290]{mL89}]}\label{lem:CS}
Define the quantities $C_l^{m,s}$, $S_l^{m,s}$, $A_l^{m,s}$
as
\begin{align}
\label{eq:CSA}
\begin{array}{rcl}
    C_l^{m,s} & \equiv & \frac{1}{2} \left( U_l^{m,s} + (-1)^s U_l^{m,-s} \right), \\[0.5em]
    S_l^{m,s} & \equiv & \frac{1}{2} \left( U_l^{m,s} - (-1)^s U_l^{m,-s} \right), \\[0.5em]
    A_l^{m,s} & \equiv & \ds \frac{\mathcal{G} m_k \epsilon_m \epsilon_s}{a_k} \frac{1}{(l + m)!}
        \left( \frac{a}{a_k} \right)^l \left( \frac{r}{a} \right)^l \left( \frac{a_k}{r_k} \right)^{l + 1}\ ;
\end{array}
\end{align}
then, we have:
\begin{align}
    C_l^m + i S_l^m
    \label{eq:C_imag_S}
    = i^m \sum\limits_{s=0}^l A_l^{m,s} P_l^s (\sin \delta_k) (l - s)!
        \left\{ C_l^{m,s} \cos \left( s (\alpha_k - \pi/2) \right)
        + i S_l^{m,s} \sin \left( s (\alpha_k - \pi/2) \right) \right\}\ .
\end{align}
\end{lemma}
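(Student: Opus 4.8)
The plan is to start from the definitions $C_l^m \equiv A_l^m \cos m\alpha^\prime$ and $S_l^m \equiv A_l^m \sin m\alpha^\prime$, assemble the complex combination $C_l^m + i S_l^m = A_l^m\, e^{i m \alpha^\prime}$, and then feed the rotation formula of Lemma~\ref{lem:plm} directly into the factor $P_l^m(\sin\delta^\prime)\, e^{i m \alpha^\prime}$ that sits inside $A_l^m$. This immediately yields a two-sided sum $\sum_{s=-l}^l \Lambda_l^{m,s} P_l^s(\sin\delta_k) e^{i s \alpha_k}$. First I would substitute the explicit form $\Lambda_l^{m,s} = \frac{(l-s)!}{(l-m)!} e^{i(m-s)\pi/2} U_l^{m,s}$ from \equ{lambda}; the factor $(l-m)!$ in its denominator cancels exactly against the $(l-m)!$ carried by $A_l^m$, leaving the reduced prefactor $\frac{\mathcal{G} m_k \epsilon_m}{a_k (l+m)!}(a/a_k)^l (r/a)^l (a_k/r_k)^{l+1}$, which is precisely $A_l^{m,s}/\epsilon_s$. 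I would then split the common phase as $e^{i(m-s)\pi/2} = i^m\, e^{-i s \pi/2}$, pulling $i^m$ out front as in \equ{eq:C_imag_S} and absorbing $e^{-i s \pi/2}$ into the longitude to form $e^{i s (\alpha_k - \pi/2)}$.

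The crux is then to fold $\sum_{s=-l}^l$ into the one-sided sum $\sum_{s=0}^l$ demanded by \equ{eq:C_imag_S}. I would isolate the $s=0$ term and pair each $s \geq 1$ summand with its $-s$ partner. Converting the negative-order Legendre function by the reflection identity $P_l^{-s}(x) = (-1)^s \frac{(l-s)!}{(l+s)!} P_l^s(x)$ (consistent with the Condon--Shortley convention flagged in the footnote to \equ{eq:kaula_tmp}) turns the $-s$ term into $(-1)^s (l-s)!\, U_l^{m,-s} P_l^s(\sin\delta_k)\, e^{-i s(\alpha_k - \pi/2)}$, so that it shares the factorial $(l-s)!$ and the Legendre factor $P_l^s(\sin\delta_k)$ of its $+s$ partner. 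Writing $e^{\pm i s(\alpha_k - \pi/2)} = \cos(s(\alpha_k-\pi/2)) \pm i \sin(s(\alpha_k-\pi/2))$ and collecting, the paired contribution becomes $(U_l^{m,s} + (-1)^s U_l^{m,-s})\cos(s(\alpha_k-\pi/2)) + i(U_l^{m,s} - (-1)^s U_l^{m,-s})\sin(s(\alpha_k-\pi/2))$, in which the definitions \equ{eq:CSA} of $C_l^{m,s}$ and $S_l^{m,s}$ appear verbatim (up to an overall factor of $2$).

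Finally I would reconcile the $\epsilon_s$ weight: the $\pm s$ pairing supplies exactly the factor $2$ for each $s \geq 1$, which is the $\epsilon_s$ built into $A_l^{m,s}$, whereas the unpaired $s=0$ term carries $\epsilon_0 = 1$ and, since $S_l^{m,0}=0$ and $C_l^{m,0}=U_l^{m,0}$, slots in as the $s=0$ summand with no extra factor. Reassembling reproduces \equ{eq:C_imag_S}. I expect the principal obstacle to be bookkeeping rather than conceptual: tracking the sign conventions so that nothing stray survives---the Condon--Shortley phase in $P_l^{-s}$, the $(-1)^s$ weights in the definitions of $C_l^{m,s}$ and $S_l^{m,s}$, and the clean $i^m$ versus $e^{-i s \pi/2}$ split of the rotation phase---so that the factors of two and the $\epsilon_s$ absorb correctly. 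A secondary check worth making explicit is that the $\Lambda_l^{m,s}$ entering here uses the definition \equ{ULane}--\equ{lambda} and \emph{not} the $(-1)^{m-s}$-rescaled $U_l^{m,s}$ attributed to Lane, since that rescaling is exactly where the cited proof goes astray.
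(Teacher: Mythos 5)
Your proposal is correct and follows essentially the same route as the paper's proof in Appendix~\ref{app:lemCS}: assemble $C_l^m + i S_l^m = A_l^m e^{i m \alpha^\prime}$, apply Lemma~\ref{lem:plm} with the explicit $\Lambda_l^{m,s}$ to get the two-sided sum with prefactor $i^m (l-s)!\,A_l^{m,s}/\epsilon_s$, and then use the reflection identity \equ{eq:Legendre_prop} together with $A_l^{m,s}=A_l^{m,-s}$ and the identity $C_l^{m,s}\cos(s\beta)+iS_l^{m,s}\sin(s\beta)=\tfrac{1}{2}\bigl(U_l^{m,s}e^{is\beta}+(-1)^sU_l^{m,-s}e^{-is\beta}\bigr)$ to identify it with the one-sided sum \equ{eq:C_imag_S}. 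The only (immaterial) difference is the direction of the last step: the paper expands the claimed one-sided formula and checks it reproduces the two-sided expression, whereas you fold the two-sided sum into the one-sided form by pairing $\pm s$; the algebra and the $\epsilon_s$ bookkeeping are identical.
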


The proof of Lemma~\ref{lem:CS} is detailed in Appendix~\ref{app:lemCS}.

\begin{remark} %WARNING
As first noted by \citet{mL89}, in \citet{gG74} there was an incorrect derivation of the above formula
for $C_l^m+iS_l^m$, which propagated over the paper.
\end{remark} %WARNING

As in \citet{mL89}, we use Lemma~\ref{lem:CS} to transform the expansion \equ{eq:kaula}.
However, we remark that the expression \equ{eq:expan_Theta} below differs from formula (6)
in \citet[][p. 291]{mL89} by a factor $(-1)^{m+s}$. However, although the formulation of
Lemma~\ref{lem:plm} in \citet{mL89} is not correct, the functions $U_l^{m,s}$ in \citet{mL89} contain
a factor $(-1)^{m-s}$, which compensates the factor $(-1)^{m+s}$ in the expansion \equ{eq:expan_Theta}.

\begin{lemma}{[\citet[][p. 291-292]{mL89}]}\label{lem:theta}
Let $I_k$ be the inclination of the Moon, referred
to the ecliptic plane. Let
\begin{align*}
    \theta_{lsq}^\prime \equiv (l - 2q) (\omega_k + f_k) + s (\Omega - {\pi\over 2})\ ,
\end{align*}
where $f_k$ is the true anomaly of the Moon referred to the ecliptic.

Then, we have the following expansion:
\begin{align}
    \mathcal{R}_k
    \label{eq:expan_Theta}
    = \sum\limits_{l \geq 2} \sum\limits_{m = 0}^l \sum\limits_{p = 0}^l \sum\limits_{s = 0}^l
        \sum\limits_{q = 0}^l (-1)^{m+s}\ (-1)^{k_1} A_l^{m,s} (l - s)! F_{lmp} (I) F_{lsq} (I_k) \Theta_{lmpsq}\ ,
\end{align}
where, if $m$ is even, $l - m$ is even:
\begin{align}
    \label{eq:Theta_even}
    \Theta_{lmpsq} = \left\{
    \begin{array}{cl}
        \frac{1}{2} \left[ (-1)^s U_l^{m,-s} \cos \left( \theta_{lmp} + \theta_{lsq}^\prime \right)
        + U_l^{m,s} \cos \left( \theta_{lmp} - \theta_{lsq}^\prime \right) \right] & \text{$l - s$ even} \\[0.5em]
        \frac{1}{2} \left[ (-1)^s U_l^{m,-s} \sin \left( \theta_{lmp} + \theta_{lsq}^\prime \right)
        - U_l^{m,s} \sin \left( \theta_{lmp} - \theta_{lsq}^\prime \right) \right] & \text{$l - s$ odd}\ ;
    \end{array} \right.
\end{align}
if $m$ is even, $l - m$ is odd; or $m$ is odd, $l - m$ is even:
\begin{align}
    \label{eq:Theta_even_odd}
    \Theta_{lmpsq} = \left\{
    \begin{array}{cl}
        \frac{1}{2} \left[ (-1)^s U_l^{m,-s} \sin \left( \theta_{lmp} + \theta_{lsq}^\prime \right)
        + U_l^{m,s} \sin \left( \theta_{lmp} - \theta_{lsq}^\prime \right) \right] & \text{$l - s$ even} \\[0.5em]
        \frac{1}{2} \left[ -(-1)^s U_l^{m,-s} \cos \left( \theta_{lmp} + \theta_{lsq}^\prime \right)
        + U_l^{m,s} \cos \left( \theta_{lmp} - \theta_{lsq}^\prime \right) \right] & \text{$l - s$ odd}\ ;
    \end{array} \right.
\end{align}
if $m$ is odd, $l - m$ is odd:
\begin{align}
    \label{eq:Theta_odd}
    \Theta_{lmpsq} = \left\{
    \begin{array}{cl}
        \frac{1}{2} \left[ -(-1)^s U_l^{m,-s} \cos \left( \theta_{lmp} + \theta_{lsq}^\prime \right)
        - U_l^{m,s} \cos \left( \theta_{lmp} - \theta_{lsq}^\prime \right) \right] & \text{$l - s$ even} \\[0.5em]
        \frac{1}{2} \left[ -(-1)^s U_l^{m,-s} \sin \left( \theta_{lmp} + \theta_{lsq}^\prime \right)
        + U_l^{m,s} \sin \left( \theta_{lmp} - \theta_{lsq}^\prime \right) \right] & \text{$l - s$ odd}\ .
    \end{array} \right.
\end{align}
\end{lemma}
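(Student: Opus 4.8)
The plan is to insert the expression for $C_l^m + i S_l^m$ furnished by Lemma~\ref{lem:CS} into the expansion \equ{eq:kaula}, and then to apply to the lunar variables the very same Kaula inclination-function identity that was used to pass from \equ{eq:kaula_tmp} to \equ{eq:kaula}. Read off from that passage, the identity is $P_l^m(\sin\delta)\cos(m\alpha)=\sum_p(-1)^m F_{lmp}(I)\,c_{lm}(\theta_{lmp})$ and $P_l^m(\sin\delta)\sin(m\alpha)=\sum_p(-1)^m F_{lmp}(I)\,s_{lm}(\theta_{lmp})$, where $(c_{lm},s_{lm})=(\cos,\sin)$ if $l-m$ is even and $(c_{lm},s_{lm})=(\sin,-\cos)$ if $l-m$ is odd; a direct comparison with \equ{eq:kaula} confirms this reading.

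First I would extract $C_l^m$ and $S_l^m$ separately from \equ{eq:C_imag_S} by taking real and imaginary parts. Since $A_l^{m,s}$, $P_l^s(\sin\delta_k)$, $C_l^{m,s}$, $S_l^{m,s}$ are all real, the only complex factor is $i^m$, and one uses $i^m=(-1)^{[m/2]}$ for $m$ even and $i^m=i\,(-1)^{[m/2]}$ for $m$ odd. Thus for $m$ even one gets $C_l^m=(-1)^{[m/2]}\sum_s A_l^{m,s}(l-s)!\,P_l^s(\sin\delta_k)\,C_l^{m,s}\cos(s(\alpha_k-\pi/2))$ and the analogous formula for $S_l^m$ with $S_l^{m,s}\sin(s(\alpha_k-\pi/2))$, while for $m$ odd the extra factor $i$ interchanges the two lines and flips one sign, giving $C_l^m=-(-1)^{[m/2]}\sum_s(\cdots)\,S_l^{m,s}\sin(s(\alpha_k-\pi/2))$ and $S_l^m=(-1)^{[m/2]}\sum_s(\cdots)\,C_l^{m,s}\cos(s(\alpha_k-\pi/2))$. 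This is what produces the factor $(-1)^{k_1}$, $k_1=[m/2]$, of the statement.

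Next I would apply the Kaula identity above to the Moon. Because the lunar longitude enters the identity only through the node, replacing $\alpha_k$ by $\alpha_k-\pi/2$ amounts to replacing $\Omega_k$ by $\Omega_k-\pi/2$, so that $P_l^s(\sin\delta_k)\cos(s(\alpha_k-\pi/2))=\sum_q(-1)^s F_{lsq}(I_k)\,c_{ls}(\theta_{lsq}^\prime)$, and likewise with $s_{ls}$ for the sine, the parity switch now being governed by $l-s$; the resulting angle is precisely the $\theta_{lsq}^\prime$ of the statement and the step supplies a uniform factor $(-1)^s$. Substituting into \equ{eq:kaula}, I would collapse each product of a satellite trigonometric factor in $\theta_{lmp}$ with a lunar one in $\theta_{lsq}^\prime$ by the product-to-sum formulas, and collect the coefficients via $C_l^{m,s}+S_l^{m,s}=U_l^{m,s}$ and $C_l^{m,s}-S_l^{m,s}=(-1)^s U_l^{m,-s}$, both immediate from \equ{eq:CSA}. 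Each pair then organizes into a combination of $U_l^{m,s}\,(\cos\text{ or }\sin)(\theta_{lmp}-\theta_{lsq}^\prime)$ and $(-1)^s U_l^{m,-s}\,(\cos\text{ or }\sin)(\theta_{lmp}+\theta_{lsq}^\prime)$, i.e.\ exactly the building blocks of $\Theta_{lmpsq}$.

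The bulk of the work, and the main obstacle, is the sign bookkeeping across the parity cases. The prefactor of \equ{eq:kaula} is $(-1)^m$, the extraction contributes $(-1)^{[m/2]}$, and the lunar identity contributes $(-1)^s$, so the total ancillary sign is invariably $(-1)^m(-1)^{[m/2]}(-1)^s=(-1)^{m+s}(-1)^{k_1}$, matching the front factor of \equ{eq:expan_Theta}; what remains inside the braces must then be shown to equal $\Theta_{lmpsq}$. This forces a four-fold split on the parities of $m$ and $l-m$ (fixing whether \equ{eq:kaula} pairs $C_l^m$ with $\cos\theta_{lmp}$ or with $\sin\theta_{lmp}$, and which of $C_l^{m,s},S_l^{m,s}$ is attached to which trigonometric factor), together with an inner split on the parity of $l-s$ (fixing whether the lunar identity returns cosines or sines), the two middle parity combinations collapsing to the single formula \equ{eq:Theta_even_odd}. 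Running the product-to-sum reduction in each elementary situation and comparing with \equ{eq:Theta_even}, \equ{eq:Theta_even_odd}, \equ{eq:Theta_odd} completes the identification; for instance, in the case $m$ even, $l-m$ even, $l-s$ even one reaches exactly $\tfrac12[(-1)^s U_l^{m,-s}\cos(\theta_{lmp}+\theta_{lsq}^\prime)+U_l^{m,s}\cos(\theta_{lmp}-\theta_{lsq}^\prime)]$, and summing over $l,m,p,s,q$ reproduces \equ{eq:expan_Theta}.
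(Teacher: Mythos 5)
Your proposal is correct and follows essentially the same route as the paper's own proof in Appendix~\ref{app:Lemma7}: extracting $C_l^m$, $S_l^m$ from Lemma~\ref{lem:CS} via real and imaginary parts (which is precisely the remark closing Appendix~\ref{app:lemCS}, with the factor $(-1)^{k_1}$), applying Kaula's identity to the lunar coordinates with the node shifted by $\pi/2$ and the factor $(-1)^s$ from the $P_l^s$ convention, and then reducing each parity case by product--to--sum formulas together with $C_l^{m,s}+S_l^{m,s}=U_l^{m,s}$ and $C_l^{m,s}-S_l^{m,s}=(-1)^s U_l^{m,-s}$. The sign bookkeeping $(-1)^m(-1)^{[m/2]}(-1)^s=(-1)^{m+s}(-1)^{k_1}$ and the case structure (with the two mixed-parity cases collapsing to \equ{eq:Theta_even_odd}) match the paper exactly.
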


As mentioned in \citet{mL89}, the proof of Lemma~\ref{lem:theta} requires ``a considerable amount
of tedious algebra". Although Lemma~\ref{lem:theta} is essential to get the correct expression for $\R_k$,
for the readability of this paper we postpone its proof to Appendix~\ref{app:Lemma7}.

Using the relation $\cos (x - {\pi\over 2}) = \sin x$, we can write
\eqref{eq:Theta_even}-\eqref{eq:Theta_odd} in a single case as follows.

\begin{lemma}{[\citet[p. 292]{mL89}]}\label{lem:theta2}
Let the quantity $y_s$ be defined as $y_s = s/2 - [{s/2}]$.
Then, we can write:
\begin{align}
    \label{eq:Theta_unite}
    \Theta_{lmpsq} = \frac{1}{2} \left[ (-1)^{k_2} U_l^{m, -s} \cos \left( \theta_{lmp}
        + \theta_{lsq}^\prime - y_s \pi \right) + (-1)^{k_3} U_l^{m, s} \cos \left( \theta_{lmp}
        - \theta_{lsq}^\prime - y_s \pi \right) \right],
\end{align}
where $k_2 = t (m + s - 1) + 1$, $k_3 = t (m + s)$, $t = (l - 1)\ \mod 2$ (i.e., $t = 0$ if $l - 1$ is even, $t = 1$, if $l - 1$ is
odd).
\end{lemma}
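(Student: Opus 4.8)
The plan is to collapse the six parity-dependent expressions for $\Theta_{lmpsq}$ furnished by Lemma~\ref{lem:theta} into the single closed form \eqref{eq:Theta_unite} by means of the elementary identity $\cos(x - \pi/2) = \sin x$, followed by a systematic reconciliation of signs. First I would observe that each of the three cases of Lemma~\ref{lem:theta} fixes the parity of $l$ (namely $l$ even in \eqref{eq:Theta_even} and \eqref{eq:Theta_odd}, and $l$ odd in \eqref{eq:Theta_even_odd}), so that within each case the subcase condition ``$l - s$ even/odd'' is equivalent to a condition on the parity of $s$ alone. Running through the three cases one checks that, uniformly, $s$ even produces the cosine form and $s$ odd the sine form. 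Since $y_s = 0$ for $s$ even and $y_s = 1/2$ for $s$ odd, the substitution $\cos(\theta_{lmp} \pm \theta_{lsq}^\prime - y_s\pi) = \cos(\theta_{lmp}\pm\theta_{lsq}^\prime)$ for $s$ even and $= \sin(\theta_{lmp}\pm\theta_{lsq}^\prime)$ for $s$ odd reproduces this cosine/sine alternation across all cases in one stroke, and, crucially, introduces no extra sign, because $\cos(x-\pi/2) = +\sin x$. Note also that the pairing is consistent throughout: $U_l^{m,-s}$ always accompanies $\theta_{lmp}+\theta_{lsq}^\prime$ and $U_l^{m,s}$ always accompanies $\theta_{lmp}-\theta_{lsq}^\prime$, exactly as in the target \eqref{eq:Theta_unite}.

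The remaining work is to verify that the compact exponents $k_2 = t(m+s-1)+1$ and $k_3 = t(m+s)$, with $t = (l-1)\bmod 2$, reproduce the coefficients standing in front of $U_l^{m,-s}$ and $U_l^{m,s}$, respectively, in every subcase. Here I would exploit the structural relation $k_2 - k_3 = 1 - t$: for $l$ even ($t=1$) the two $U$-terms must carry equal signs, while for $l$ odd ($t=0$) they must carry opposite signs. This reduces the bookkeeping to determining a single sign per case, since $(-1)^{k_3} = +1$ whenever $l$ is odd and $(-1)^{k_3} = (-1)^{m+s}$ whenever $l$ is even; one then reads off $(-1)^{k_2}$ from $k_2 = k_3 + (1-t)$.

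Finally I would tabulate the eight parity combinations of $(l, m, s)$, in each case collecting the explicit $(-1)^s$ factor attached to $U_l^{m,-s}$ in Lemma~\ref{lem:theta} together with the overall leading sign of that branch, and compare the product against $(-1)^{k_2}$; the analogous comparison of the $U_l^{m,s}$-term against $(-1)^{k_3}$ is simpler, as that term carries no $(-1)^s$ prefactor. The main obstacle is not conceptual but lies entirely in this sign accounting: one must keep straight how the explicit $(-1)^s$ factors of Lemma~\ref{lem:theta}, the leading sign of each branch, and the parity-of-$l$ dependence hidden inside $t$ all combine, and confirm that their product coincides with $(-1)^{k_2}$, $(-1)^{k_3}$ in each of the eight parity sectors. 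Since these sectors exhaust all possibilities and the cosine/sine device has already settled the trigonometric part, the matching of signs completes the identification of \eqref{eq:Theta_unite}.
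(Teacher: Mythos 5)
Your proposal is correct and follows essentially the same route as the paper's own proof in Appendix~\ref{app:Lemma8}: a direct parity case check, using $\cos(x-\pi/2)=\sin x$ to unify the cosine/sine branches and then verifying that $(-1)^{k_2}$, $(-1)^{k_3}$ reproduce the signs of Lemma~\ref{lem:theta} in each sector. Your observations that $k_2-k_3=1-t$ (equal signs for $l$ even, opposite for $l$ odd) and that the parity of $s$ alone governs the cosine/sine alternation are just a slightly more structured bookkeeping of the same eight-case verification the paper carries out.
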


The proof of Lemma~\ref{lem:theta2} is a check of the different cases
in which $m$ is even or odd, $l$ is even or odd. The proof is detailed in Appendix~\ref{app:Lemma8}.\\

We are finally ready to give the proof of Proposition~\ref{pro:Lane}, which is based on
Lemma~\ref{lem:CS}, Lemma~\ref{lem:theta}, and Lemma~\ref{lem:theta2}.

\begin{proof}[Proof of Proposition~\ref{pro:Lane}]
We can write $\R_k$ in \eqref{eq:expan_Theta} with $A_l^{m,s}$ as in Lemma~\ref{lem:CS} (see \equ{eq:CSA}) and
$\Theta_{lmpsq}$ as in Lemma~\ref{lem:theta2} (see \equ{eq:Theta_unite}). This leads to the following expression:
\begin{align}
    \mathcal{R}_k
    \nonumber
    & = \sum\limits_{l \geq 2} \sum\limits_{m = 0}^l \sum\limits_{p = 0}^l \sum\limits_{s = 0}^l
        \sum\limits_{q = 0}^l (-1)^{m+s}\ (-1)^{k_1} \frac{\mathcal{G} m_k \epsilon_m \epsilon_s}{2 a_k}
        \frac{(l - s)!}{(l + m)!} \left( \frac{a}{a_k} \right)^l \left( \frac{r}{a} \right)^l \left( \frac{a_k}{r_k} \right)^{l + 1}
        F_{lmp} (I) F_{lsq} (I_k) \\
    \label{eq:lane_temp}
    & \hspace{12pt} \times \left\{ (-1)^{k_2} U_l^{m, -s}
        \cos \left( \theta_{lmp} + \theta_{lsq}^\prime - y_s \pi \right)
        + (-1)^{k_3} U_l^{m, s} \cos \left( \theta_{lmp} - \theta_{lsq}^\prime - y_s \pi \right)  \right\}\ .
\end{align}
Next, we use the following expansion in terms of Hansen's coefficients $X_r^{n, m} (e)$ \citep{jC72,gG76}:
\begin{align}
    \label{eq:Hansen_coeff}
    \left( \frac{r}{a} \right)^l e^{i m f} = \sum\limits_{j = -\infty}^{+\infty} X_{m + j}^{l, m} (e) e^{i (m + j) M}\ .
\end{align}
Using \eqref{eq:Hansen_coeff} and \eqref{eq:Theta_bar}, we can
write \eqref{eq:lane_temp} as in \equ{eq:lane}, due to \eqref{eq:Hansen_coeff}
with $m = l - 2p$ and to
\begin{align*}
    \left( \frac{a_k}{r_k} \right)^{l + 1} e^{i m f_k}
    = \sum\limits_{r = -\infty}^{+\infty} X_{m + r}^{-(l + 1), m} (e_k) e^{i (m + r) M_k}\ ,
\end{align*}
with $m = l - 2q$.
Recalling the relation between the Hansen's coefficients and the functions $H_{lpj}$, $G_{lqr}$
(see Appendix~\ref{app:Hansen}), one is led to the expansion \equ{eq:lane}.
\end{proof}

In Appendix~\ref{app:orbit_comparison}, we validate the expansions \eqref{Rsun} and \equ{eq:lane}
by comparing some orbits propagated both by using a Cartesian model and a model based on the above lunisolar expansions.

% --------------------------------------------------------------------------------------------------------------------------------------
%          AVERAGING
% --------------------------------------------------------------------------------------------------------------------------------------
\section{An application to lunisolar resonances: the critical inclination secular resonance} \label{sec:averages}

The expansions \eqref{Rsun} and  \equ{eq:lane} allow one to have very versatile formulas for the solar and lunar potentials, which may be used in various investigations. As an example, we describe here an application of these expansions to the study of lunisolar secular resonances. Truncating the series \eqref{Rsun} and \equ{eq:lane} to second order in the ratio of semi--major axes and averaging over both mean anomalies of the point mass and of the third body, we show how the solar and lunar disturbing functions may be used to get a global picture of the long--term complex evolution of resonant orbits. In particular, we present some results obtained for the so--called critical inclination resonance, which arises when the orbital inclination of the point mass is equal to $63.4^\circ$ \citep[see][]{sH80,mH81,fDaM93,ElyHowell,aR15}. Three sample cases are considered, namely Molniya 1-81, Molniya 1-88 and Molniya 1-86, and their dynamics is investigated by evaluating the Fast Lyapunov Indicators (hereafter FLIs),
introduced in \cite{froes} and used in similar contexts in \cite{CGmajor}, \cite{CGext}, \cite{CGminor}, and \citet{jD16}.

Being interested in the long--term dynamics of resonant orbits, the short--periodic terms that depend on the  mean anomaly of the satellite or the mean anomaly of the perturbing body can be averaged over from the disturbing functions. Thus, we consider a Hamiltonian of the form
\begin{equation}\label{Hamiltonian}
\H=\H_{Kep}+\H_{Geo}+\H_{Moon}+\H_{Sun}\ ,
\end{equation}
where $\H_{Kep}$ represents the Kepler Hamiltonian, while the functions $\H_{Geo}$, $\H_{Moon}$ and $\H_{Sun}$ describe the perturbations due to Earth, Moon and Sun, respectively, averaged over the mean anomalies of the satellite and the perturbing body.

Using the Delaunay action--angle variables $(L,G,H,M,\omega,\Omega)$, where the actions are defined by
\begin{equation}\label{Delaunay}
L=\sqrt{\mu_E a}\,, \quad G=L\sqrt{1-e^2}\,,\quad H=G \cos I
\end{equation}
with $\mu_E= \G m_E$ the product of the gravitational constant $\G$ and the Earth's mass $m_E$, then the Keplerian part is given by
$$ \H_{Kep}(L)=-\frac{\mu_E^2}{2 L^2}\,.  $$
Concerning the disturbing function due to the Earth, we consider only the most important contribution, corresponding to the $J_2$ gravity coefficient of the secular part \citep[see][]{CGmajor}, precisely
$$
\H_{geo}(L,G,H)={{R_E^2 J_2 \mu_E^4}\over {4}}\ {{1}\over {L^3G^3}}\ (1-3{H^2\over G^2})\ ,
$$
where $R_E$ is the mean equatorial radius of the Earth.
For the effects of the second power of $J_2$ and of the higher--order harmonics we refer to \cite{CDD1994}.
Our neglect of  the second--order $J_2$ contribution means that our quantitative
results at lower semi--major axes, where these effects become more important, should be taken with a grain of salt.

The perturbations due to Moon and Sun are given by
$$\H_{Moon}=-\overline{\R}_{Moon}\,,\qquad \H_{Sun}=-\overline{\R}_{Sun}\ ,$$
where $\overline{\R}_{Moon}$ and $\overline{\R}_{Sun}$ are obtained from \equ{eq:lane} and  \eqref{Rsun}, respectively, by setting $l=2$ in the two expansions, so that the lunar and solar potentials are approximated by quadrupole fields, and by taking the average over the mean anomalies of both the point mass and the perturbing body. Of course, in the resulting expansions, the orbital elements are expressed in terms of the Delaunay variables.

Since $M$ is an ignorable variable, its conjugated action $L$ (or equivalently the semi--major axis $a$)  is a constant. Thus, the Hamiltonian system described by \eqref{Hamiltonian} is non--autonomous with two degrees of freedom.
%Usually, such systems are called Hamiltonian systems with two and a half degrees of freedom.
Analytical studies dating back to the '60s, see, for instance \citet{pM61} and \citet{Harr1969},
have shown that the expansion up to the order $l=2$ of the gravitational potential due to a third--body
perturber in the secular problem is independent of the perturber's argument of periapsis.
And, in fact, the existence of the fundamental Lidov-Kozai cycles hinges on this fact \citep[q.v.,][]{yLsN11}.
The following Proposition ~\ref{prop:omega_dependence} confirms this result and
shows rigorously that the Hamiltonian $\H$ depends periodically on time just through
the longitude of the lunar ascending node $\Omega_k$.

\begin{proposition} \label{prop:omega_dependence}
The functions $\overline{\R}_{Moon}$ and $\overline{\R}_{Sun}$ do not depend on the argument of perigee of the third body.
%Up to $l=2$, the double average of the disturbing functions defined by \eqref{Rsun} and \equ{eq:lane}, where by \sl double \rm we mean that %we take the average over booth mean anomalies of the point mass and of the perturbing body, does not depend on the argument of perigee of %the third body.
\end{proposition}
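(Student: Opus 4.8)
The plan is to reduce everything to a single elementary fact about averaging: integrating one cosine against a linearly varying angle over a full period returns zero unless that angle disappears from the argument, so that after averaging over the perturber's mean anomaly only a distinguished set of summation indices survives, and for $l=2$ those are exactly the indices for which the perturber's argument of perigee has already dropped out. I would treat $\overline{\R}_{Sun}$ and $\overline{\R}_{Moon}$ in parallel, using \eqref{Rsun} for the Sun and \eqref{eq:lane} for the Moon, since the two series have the same structure in the perturber's variables and one argument covers both. First I would locate where the perturber's argument of perigee enters: in \eqref{Rsun} it appears only through the term $-(l-2h)\omega^{*}$ of $\varphi_{lmphqj}$, and in \eqref{eq:lane} only through the term $(l-2q)\omega_k$ of $\bar{\theta}^\prime_{lsqr}$ in \equ{eq:Theta_bar}. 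In both cases the integer multiplying the perturber's argument of perigee is, up to sign, the same integer $l-2h$ (respectively $l-2q$) that appears as the order of the perturber's Hansen coefficient $G_{lhj}(e^{*})=X_{l-2h+j}^{-(l+1),\,l-2h}(e^{*})$ (respectively $G_{lqr}(e_k)=X_{l-2q+r}^{-(l+1),\,l-2q}(e_k)$).

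Next I would carry out the two averages, with $l=2$ fixed. Averaging over the perturber's mean anomaly is the decisive step. In \eqref{Rsun} the angle $M^{*}$ enters $\varphi_{lmphqj}$ linearly with coefficient $-(l-2h+j)$, so $\tfrac{1}{2\pi}\int\cos(\varphi_{lmphqj})\,\mathrm{d}M^{*}$ vanishes unless $j=2h-l$, which collapses $G_{lhj}(e^{*})$ to $X_0^{-(l+1),\,l-2h}(e^{*})$; averaging over the satellite's $M$ likewise forces $q=2p-l$, but this leaves the $\omega^{*}$-content untouched and so plays no role here. The lunar series behaves identically: in each of the two cosines of \eqref{eq:lane} the angle $M_k$ appears with coefficient $\pm(l-2q+r)$, so averaging over $M_k$ forces $r=2q-l$ and collapses $G_{lqr}(e_k)$ to $X_0^{-(l+1),\,l-2q}(e_k)$, while averaging over $M$ merely fixes $j=2p-l$.

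The crux is then a vanishing property of the surviving Hansen coefficient. By the defining relation preceding \eqref{eq:lane}, $X_0^{-(l+1),m}$ is the mean value over mean anomaly of $\left(a_k/r_k\right)^{l+1}e^{i m f_k}$; changing the integration variable to the true anomaly through $\mathrm{d}M_k=(1-e_k^2)^{3/2}(1+e_k\cos f_k)^{-2}\,\mathrm{d}f_k$ turns the integrand into $(1-e_k^2)^{-(l-1/2)}(1+e_k\cos f_k)^{l-1}e^{i m f_k}$, that is a trigonometric polynomial of degree $l-1$ multiplied by $e^{i m f_k}$, whence $X_0^{-(l+1),m}=0$ whenever $\lvert m\rvert\ge l$. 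For $l=2$ the relevant index $m=l-2h$ (respectively $l-2q$) runs over $\{-2,0,2\}$, and $\lvert\pm 2\rvert=2\ge l$, so the only surviving term is $h=1$ (respectively $q=1$); for it $l-2h=0$ (respectively $l-2q=0$), and the offending factor $-(l-2h)\omega^{*}$ (respectively $(l-2q)\omega_k$) is identically zero. Hence neither $\overline{\R}_{Sun}$ nor $\overline{\R}_{Moon}$ retains any dependence on the perturber's argument of perigee. The main obstacle is establishing this Hansen-coefficient identity, and it is worth emphasizing that it is exactly what singles out the quadrupole: for $l\ge 3$ one can have $l-2h\neq 0$ with $\lvert l-2h\rvert\le l-1$, so the perturber's argument of perigee would genuinely survive the double average. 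This is consistent with the classical geometric picture in which the doubly-averaged quadrupole interaction depends on the perturber only through the projector onto its orbital plane, a quantity manifestly free of $\omega^{*}$.
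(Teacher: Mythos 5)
Your proof is correct and follows essentially the same route as the paper's: averaging over both mean anomalies forces $2-2q+r=0$ (and $2-2p+j=0$), the terms with $q=0,2$ (respectively $h=0,2$) die because $X_0^{-3,\pm 2}=0$, and the surviving $q=1$ (respectively $h=1$) term carries no $\omega_k$ (respectively $\omega^{*}$) since $l-2q=0$. The only substantive difference is that you actually prove the vanishing $X_0^{-(l+1),m}=0$ for $\lvert m\rvert\ge l$ via the change of variables to the true anomaly, whereas the paper simply asserts the two special cases it needs.
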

\begin{proof}  It is enough to prove the above statement just for one expansion, let us say for $\overline{\R}_{Moon}$. In the other case the argument is the same.

Since $\overline{\R}_{Moon}$ is obtained from \equ{eq:lane} by taking $l=2$ and averaging over the mean anomalies $M$ and $M_k$, it contains just terms for which $2-2 p+j=0$ and $2-2 q+r=0$, where $p$ and $q$ take the values $0,1,2$ and $r,j
 \in \mathbb{Z}$. Thus, to prove the statement we have to show that these terms do not depend on $\omega_k$.

In fact, we have to discuss  three cases: $q=0$, $q=1$ and $q=2$. If $q=0$, then from the equation $2-2 q+r=0$, it follows that $r=-2$ and, as a consequence, the Hansen coefficient $G_{lqr}(e_k)$ associated to the terms for which $l=2$, $q=0$, $r=-2$ is $X^{-(l+1),l-2q}_{l-2q+r}(e_k)=X_0^{-3,2}(e_k)=0$. Therefore, all the terms having $l=2$, $q=0$, $r=-2$ are zero. Similarly, since $X_0^{-3,-2}(e_k)=0$, all terms of the expansions for which $l=2$, $q=2$, $r=2$ vanish. It remains to analyze the case $q=1$. From \eqref{eq:Theta_bar} it follows that $\bar{\theta}^\prime_{2s10}=s(\Omega_k- \pi/2)$ and thus, the terms for which $l=2$, $q=1$, $r=0$ do not depend on $\omega_k$.
\end{proof}

In view of the above result and of the fact that $\dot{\Omega}^*=0$, over timespans of interest, where $\Omega^*$ is the  longitude of the solar ascending node, it follows that $\H$ depends on time just through $\Omega_k$. Despite this simplification, the global dynamics of the system is quite complex.  Besides the fact that $\H$ is a non--autonomous, two degrees--of--freedom Hamiltonian, it also depends parametrically on the semi--major axis $a$ and the combined effects of $\H_{{G}eo}$, $\H_{Moon}$ and $\H_{Sun}$ lead to an intricate dynamics.
A major role in the long term evolution of orbital elements is played by the lunisolar secular resonances, which occur when some specific linear combinations of the secular precession frequencies vanish
\citep[see][and references therein, for a detailed presentation of the subject]{sH80, ElyHowell, aR15, jD16}.

\vskip.1in

Here we focus on the so--called critical inclination resonance, which occurs when the commensurability relation $\dot{\omega}=0$ holds.
Following \citet{sH80} \citep[see also][]{gC62,aR08}, this resonance together with other two types of secular resonances, characterized by the commensurability conditions $\dot{\Omega}=0 $ (polar resonance) and $\alpha \dot{\omega}+ \beta \dot{\Omega}=0$, with $\alpha, \beta \in \mathbb{Z}\setminus \{0\}$, form the class of resonances dependent only on the satellite's orbital inclination, called \sl inclination--dependent--only lunisolar resonances. \rm
The name is justified as long as one can approximate
$\dot \omega$, $\dot \Omega$ by the following well known formulae, which take into account only the effects of $J_2$ \citep[see][]{wK66, sH80}:
\beqa{omega12}
\dot\omega &\simeq& 4.98 \Bigl({R_E\over a}\Bigr)^{7\over 2}\ (1-e^2)^{-2}\ (5\cos^2 I-1)\ ^{\circ}/day\ ,\nonumber\\
\dot\Omega &\simeq& -9.97 \Bigl({R_E\over a}\Bigr)^{7\over 2}\ (1-e^2)^{-2}\ \cos I\ ^{\circ}/day\ ,
\eeqa
and, moreover, the variation of $\Omega_k$ is disregarded.  Indeed, from the first in $\eqref{omega12}$ the relation $\dot{\omega}=0$ holds for $I=63.4^\circ$ and for every value of $a$ and $e$. However, to be precise, since $\Omega_k$ varies periodically, a cosine argument of $\overline{\R}_{Moon}$ could depend also on  $\Omega_k$, and thus, besides $\dot{\omega}=0$, one also has the commensurability relations $2 \dot{\omega}+ s\dot{ \Omega}_k=0$ with $s=-2,-1,1,2$. One can say that each resonance of the above mentioned class, including the critical inclination one, splits into a multiplet of resonances. This splitting phenomenon is responsible for the existence of a very complex web--like background of resonances in the phase space, which leads to a chaotic variation of the orbital elements. An analytical estimate of the location of the resonance corresponding to each component of the multiplet, as a function of eccentricity and inclination, can be obtained by using $\eqref{omega12}$ (see, for example, Figure~2 in \citet{ElyHowell} or \citet{aR15}). Here, since we are using the Delaunay variables, we represent in Figure~\ref{WEB_structure} the web structure of resonances in the space of the actions $H$--$G$. To avoid confusions that might arise when we speak about a specific resonance, we will use the syntagma {\it exact resonance} when we refer to the component of the multiplet characterized by $s=0$ in \equ{eq:Theta_bar}, while the expression {\it whole resonance} means that we refer to all components of the multiplet.

We underline that the units
of length and time are normalized so that the geostationary
distance is unity (it amounts to $42\,164.17$ km) and that the period
of Earth's rotation is equal to $2 \pi$. As a consequence, from
Kepler's third law it follows that $\mu_E=1$. Therefore, unless
the units are explicitly specified, the action variables $L$, $G$ and $H$ are expressed in the above units.

Figure~\ref{WEB_structure} shows the structure of resonances for $a=13339.1$ km (top panels), $a=18851.7$ km (bottom left) and $a=26508.2$ km (bottom right). These values are not chosen by chance, but represent the semi--major axes of the satellites: Molniya 1-86, Molniya 1-88 and Molniya 1-81, respectively. The colored curves provide the location of the resonances, while the vertical black dashed line is drawn to point out the values of $H$ used in computing the FLI maps. In order to depict graphically the splitting phenomenon,  Figure~\ref{WEB_structure} top left panel shows the resonant structure for $G\in[0,G_{max}]$, where $G_{max}=\sqrt{\mu_E a}$. This plot contains also the horizontal black line $G=G_{min}$, where $G_{min}$ is computed from the condition that the distance of the perigee cannot be smaller than the radius of the Earth, that is
$$G_{min}=\sqrt{\frac{(2 a- R_E)\mu_E R_E}{a}}\ .$$
Therefore, the interval of interest is $[G_{min}, G_{max}]$ and the top right panel of Figure~\ref{WEB_structure} magnifies the region associated to the orbits that do not collide with the Earth. The bottom plots of Figure~\ref{WEB_structure}  are also obtained for $G\in [G_{min}, G_{max}]$.

\begin{figure}[h]
\centering
\vglue0.1cm
\hglue0.1cm
\includegraphics[width=7.3truecm,height=6truecm]{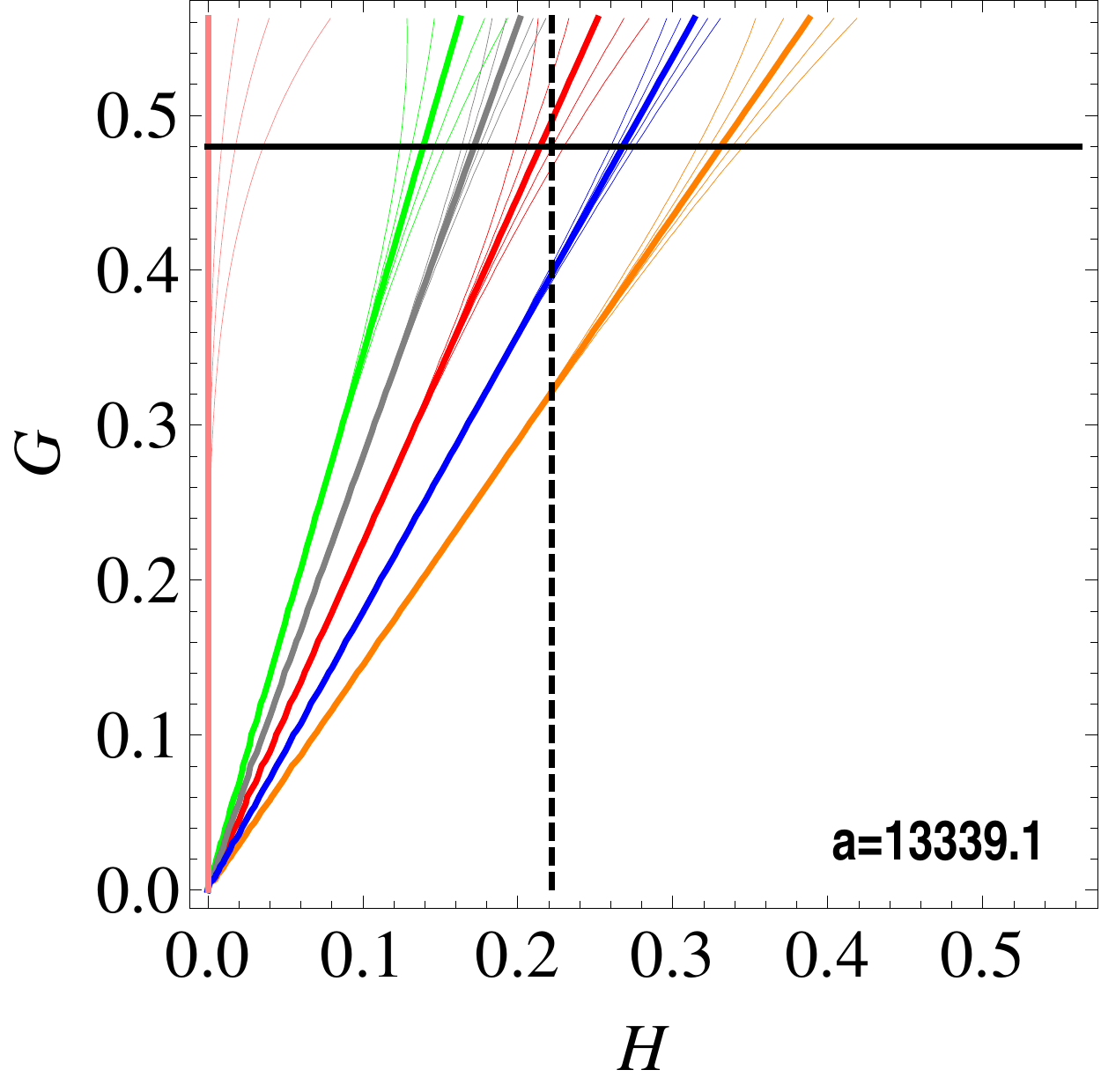}
\includegraphics[width=7.3truecm,height=6truecm]{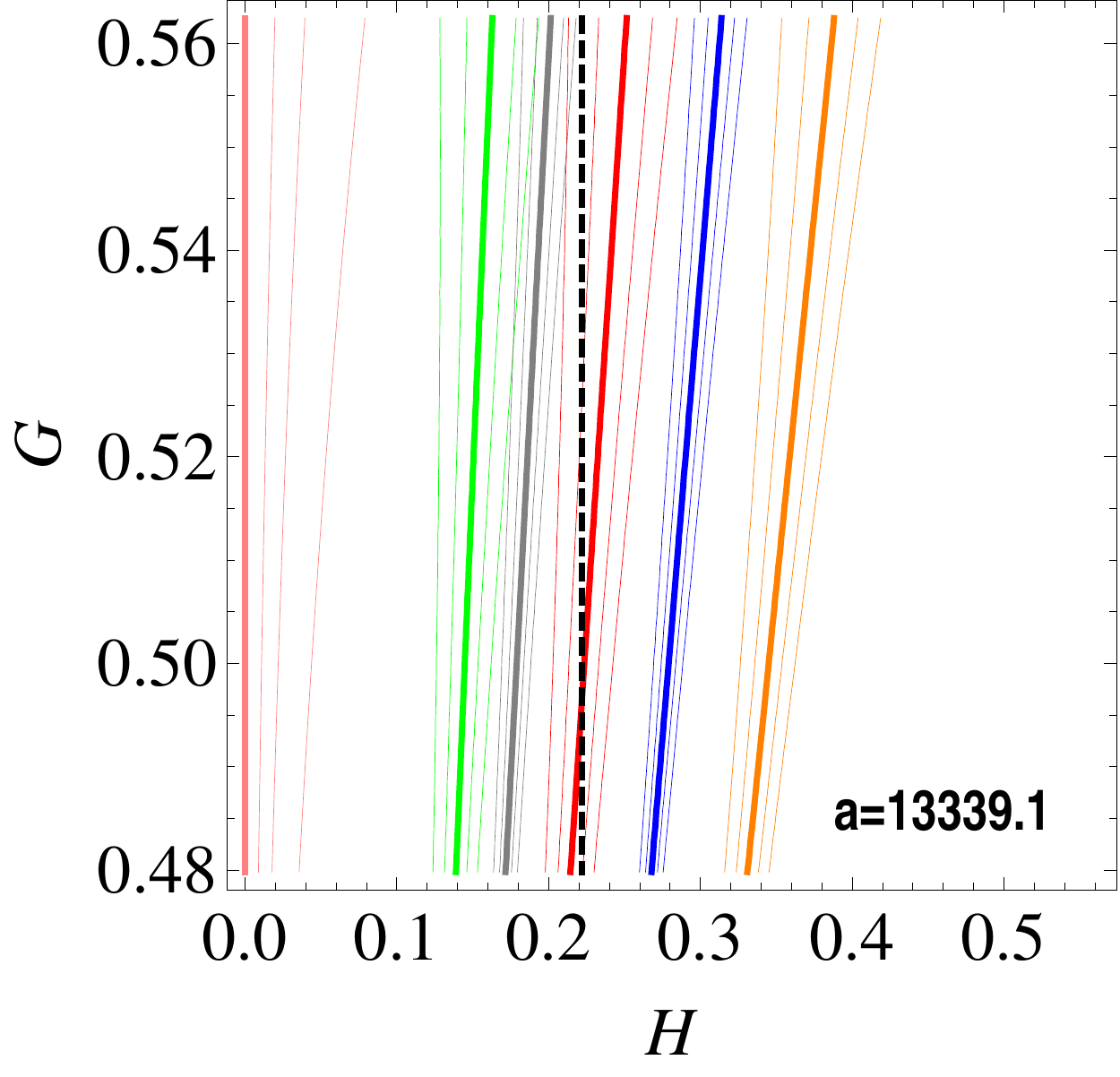}
\includegraphics[width=7.5truecm,height=6truecm]{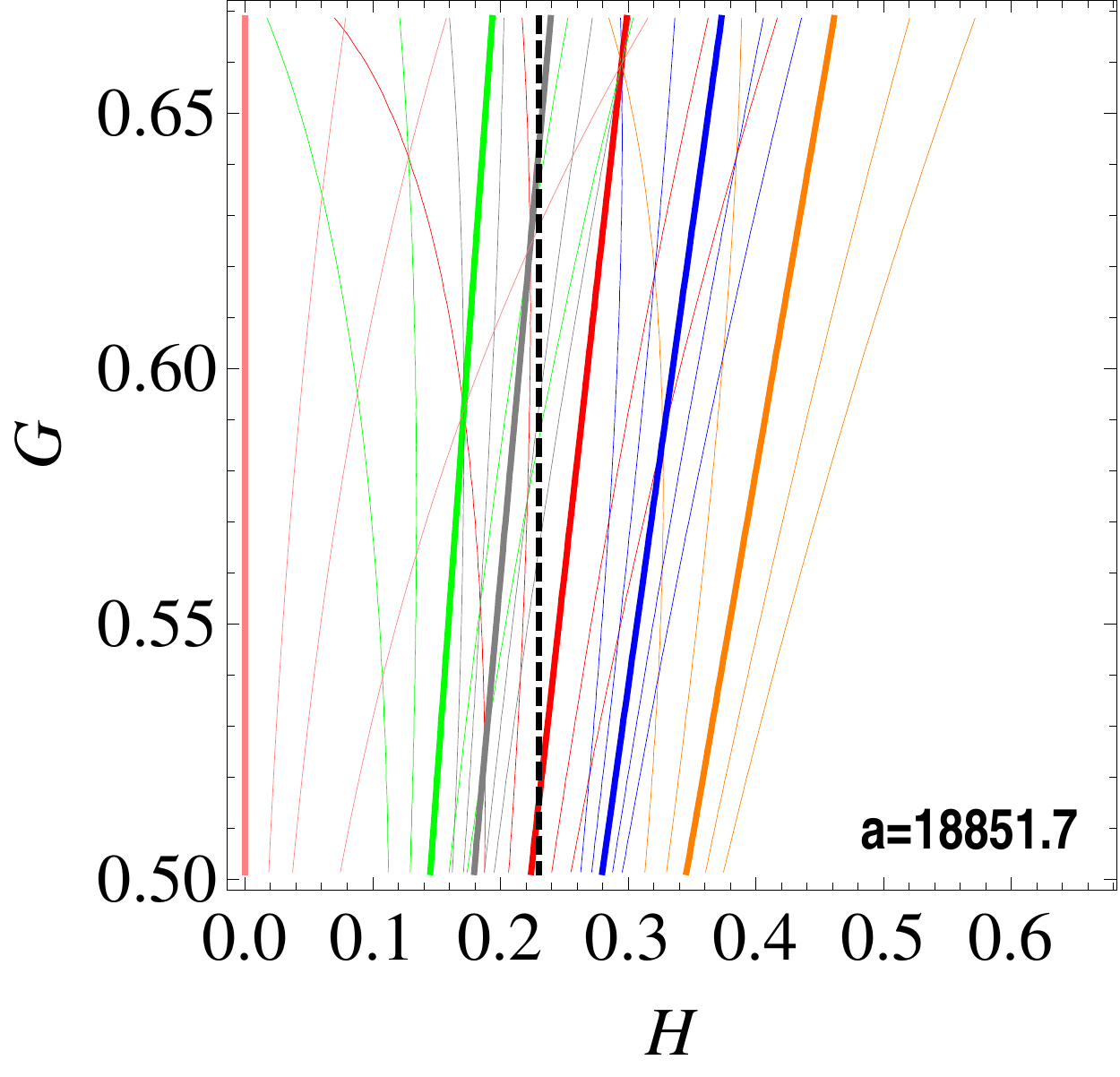}
\includegraphics[width=7.5truecm,height=6truecm]{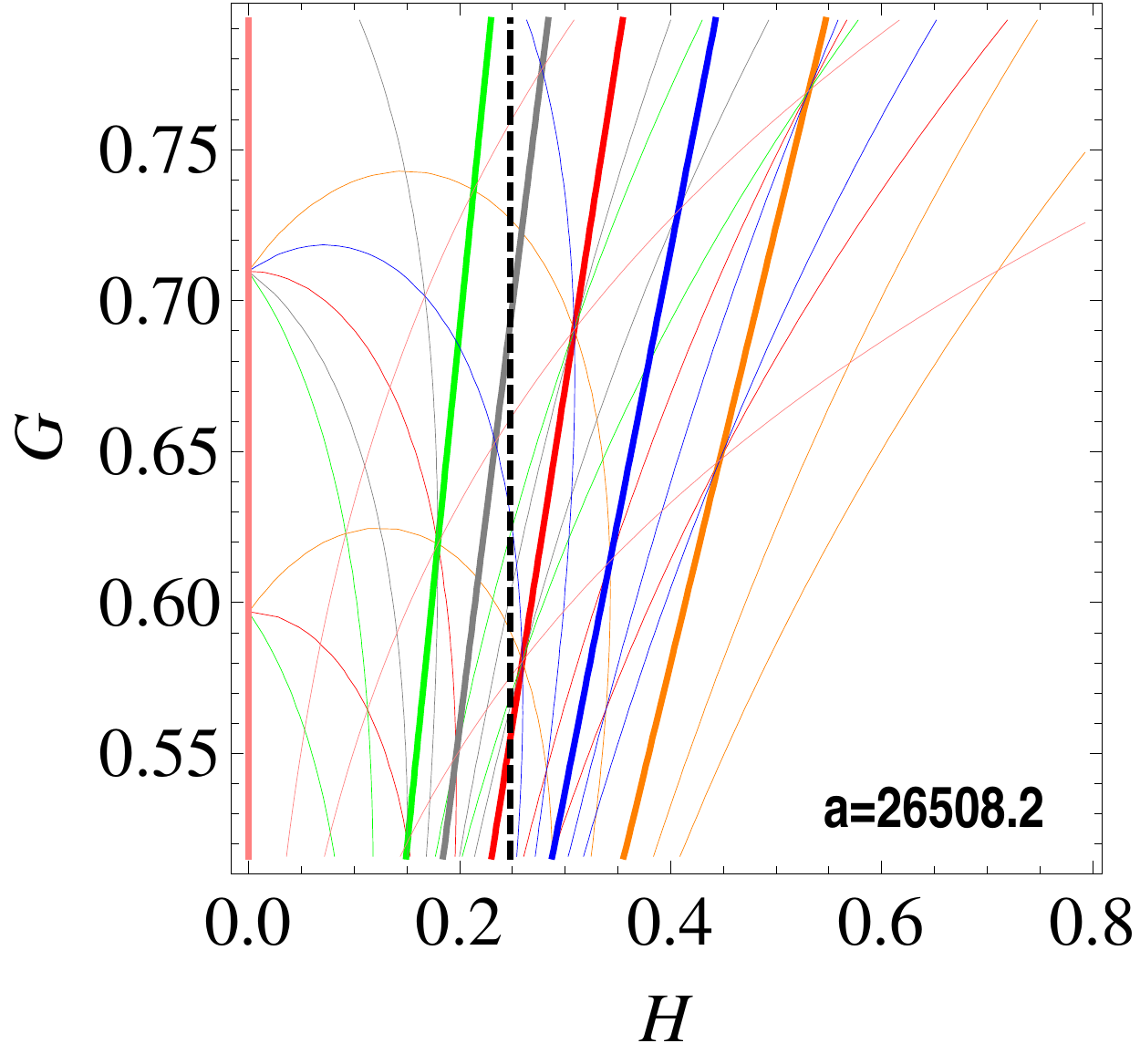}
\vglue0.5cm
\caption{The web structure of resonances in the space of the actions for $a= 13339.1 $ km (upper panels), $a= 18851.7 $ km (bottom left) and  $a= 26508.2 $ km (bottom right). The thick curves represent the location of the following {\it exact resonances} (the multiplet component having $s=0$): $\dot{\Omega}=0$ (pink color, $I=90^\circ$), $\dot{\omega}-\dot{\Omega}=0$ (green color, $I=73.2^\circ$), $2 \dot{\omega}-\dot{\Omega}=0$ (grey color, $I=69.0^\circ$), $\dot{\omega}=0$ (red color, $I=63.4^\circ$), $2\dot{\omega}+\dot{\Omega}=0$ (blue color, $I=56.1^\circ$) and $\dot{\omega}+\dot{\Omega}=0$ (orange color, $I=46.4^\circ$).  The thin curves give the position of the resonances $ (2-2 p) \dot{\omega}+ m \dot{\Omega}+ s \dot{\Omega}_k=0$ with $p,m=0,1,2$ and $s=-2,-1,1,2$.   The vertical black dashed lines correspond to the values of $H$ used in computing the FLI maps. Excluding the top left panel, which is obtained for $G\in [0, G_{max}]$, in the other plots $G$ varies from $G_{min}$ to $G_{max}$, as explained in the text.}
\label{WEB_structure}
\end{figure}

By computing the  Fast Lyapunov Indicators we investigate cartographically the dynamical features of the critical inclination resonance, i.e., the {\it whole resonance} $\dot{\omega}=0$. We recall that the FLI is an
efficient tool to locate the equilibria, to evaluate the width of the resonant islands and to  study the stable and chaotic behavior of a dynamical system by comparing
the values of the FLIs as the initial conditions or parameters are varied.
We present some results in the $(\omega, G)$ plane,
providing the value of the FLI through a color scale, where darker colors denote
regular dynamics, either periodic or quasi–periodic, while lighter colors denote
chaotic motions.

Given $a$ and $H$, we compute a grid of $100 \times 100 $ points of
the $\omega$--$G$ plane, where the argument of perigee ranges in the interval $[0^\circ, 360^\circ]$,
while $G$ spans the interval $[G_{min}, G_{max}]$. However, instead of displaying $G$ on the vertical axis, in each plot we show the eccentricity values (on the left) and the inclination values (on the right), computed by using the relations \eqref{Delaunay} for given values of $a$ and $H$.
In all plots that represent the FLI values, we use the ranges corresponding to those used in the top left and bottom panels of Figure~\ref{WEB_structure} and the top panel of Figure~\ref{FLI_GEO}. The relation among $G$, $e$ and $I$ is trivial; for instance, the values $e=0.46$, $I=63.6^\circ$ from the top panels of Figure~\ref{FLI_Molniya} correspond to the value $G=0.5$ from the top right panel of Figure~\ref{WEB_structure}.

Although the initial conditions are set such that the initial orbits have the perigee larger than $R_E$, since we are interested in finding the equilibrium points and evaluating the area of chaotic regions, during the total time of integration, we neglect the Earth's dimensions.
Namely,
we propagate each orbit up to $465$ years (equal to $25 \times 18.6$ years), even if at some intermediate time the perigee distance becomes smaller than the radius of the Earth.

\vskip.2in

\begin{figure}[h]
\centering
\vglue0.1cm
\hglue0.1cm
\includegraphics[width=7.5truecm,height=6truecm]{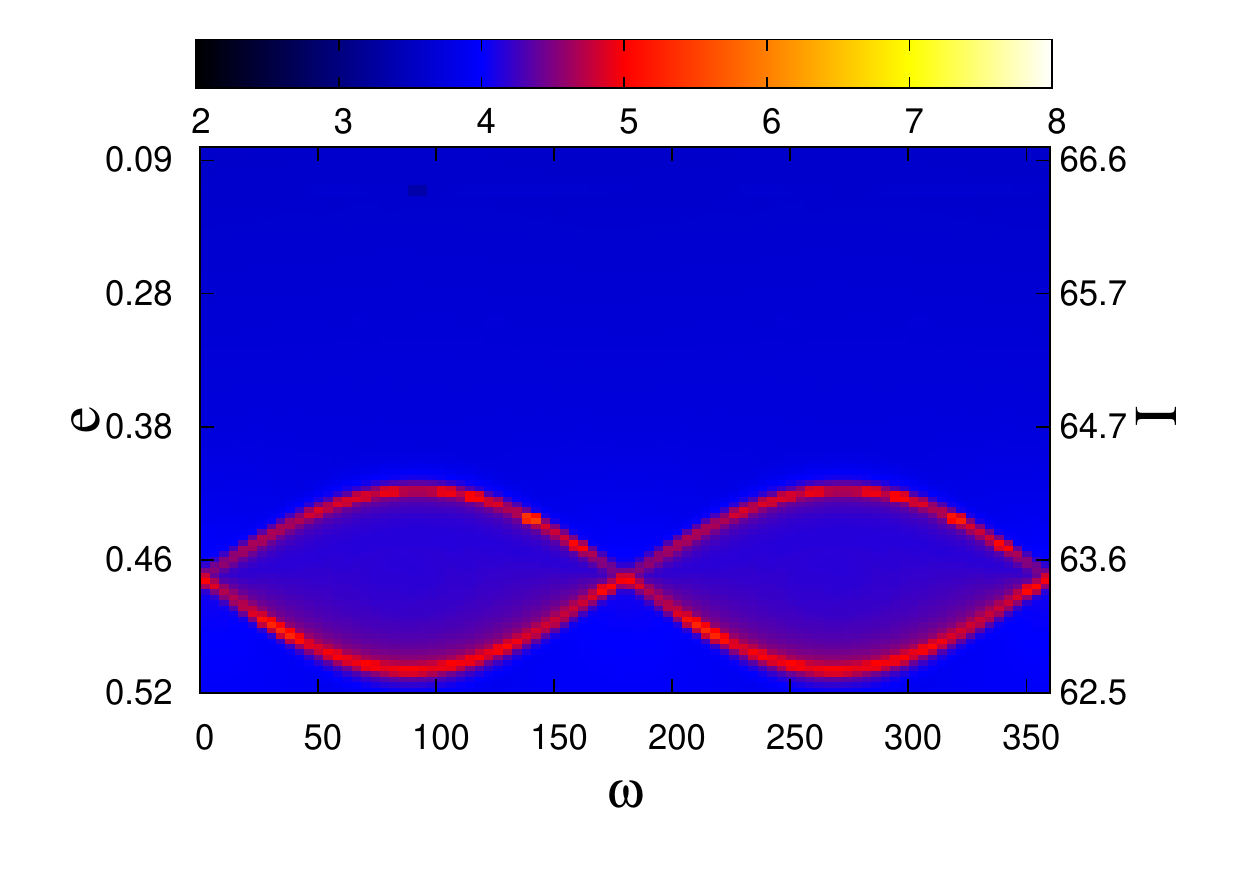}
\includegraphics[width=7.5truecm,height=6truecm]{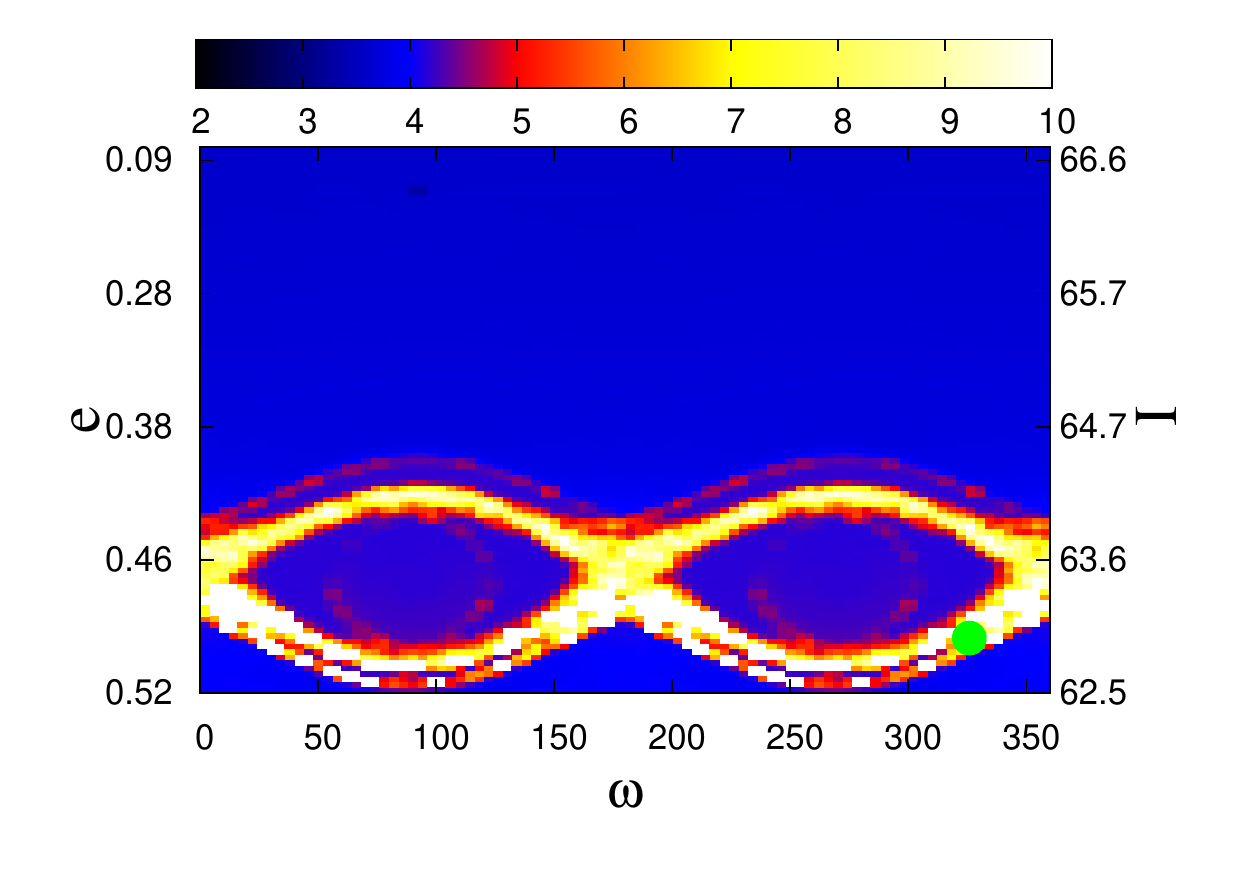}
\includegraphics[width=7.5truecm,height=6truecm]{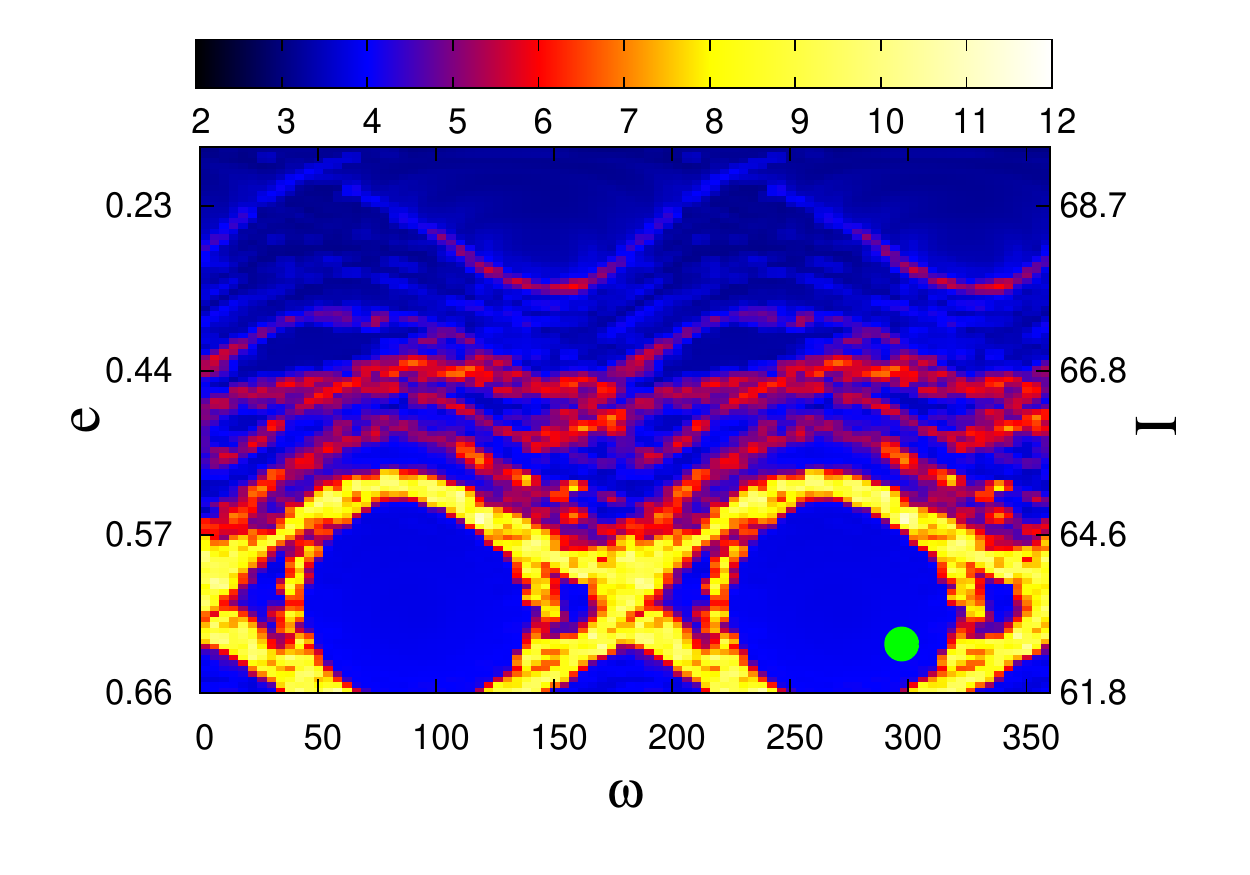}
\includegraphics[width=7.5truecm,height=6truecm]{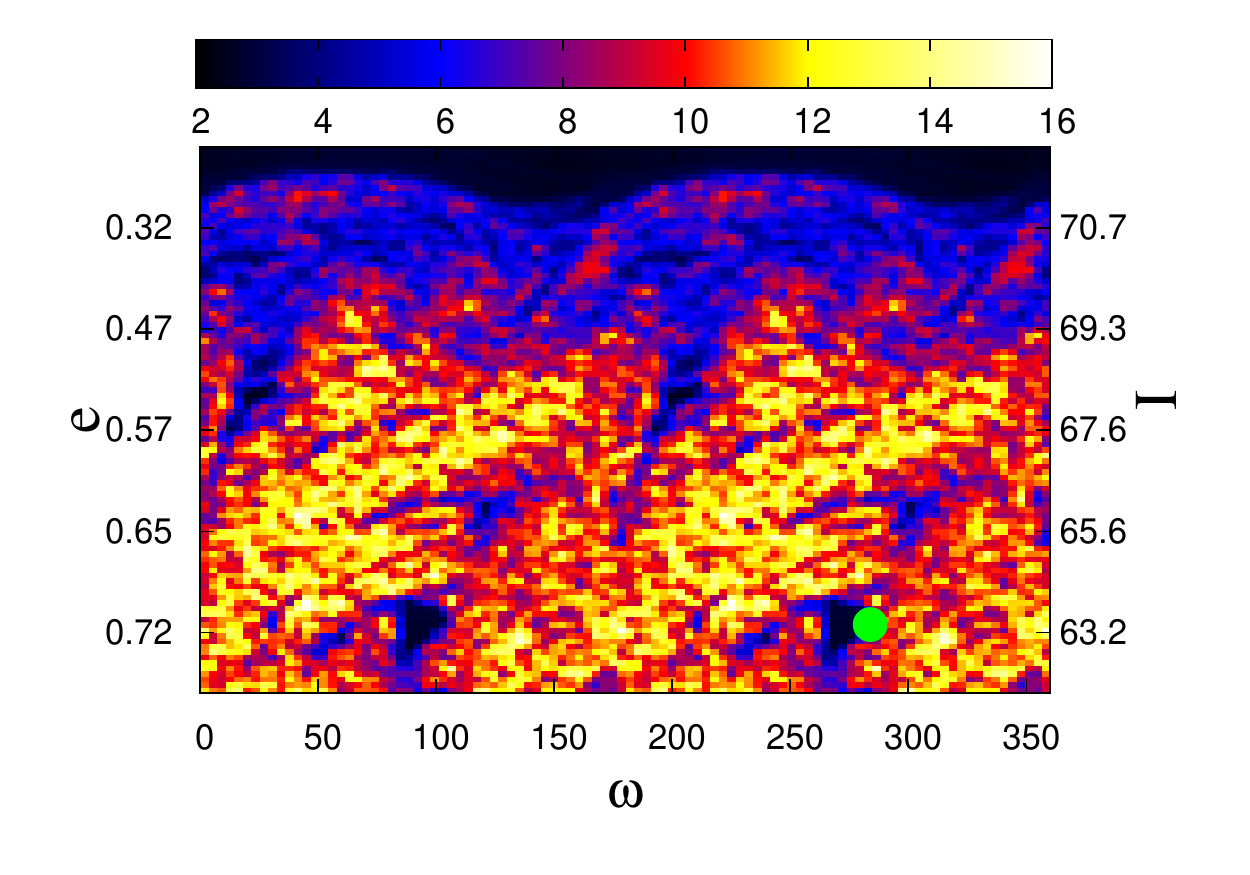}
\vglue0.5cm
\caption{FLIs for the critical inclination secular resonance.
Upper panels: $a= 13339.1 $ km, $\Omega=236.07^\circ$, $H= 0.222$. Top left: $\dot\Omega_{k}=0$. Top right: full model.
Bottom left (full model): $a= 18851.7 $ km, $\Omega=100.66^\circ$, $H= 0.236$. Bottom right (full model): $a= 26508.2 $ km, $\Omega=270.26^\circ$, $H= 0.248$. The green circles represent Molniya 1-86 on the top right panel, Molniya 1-88 in the bottom left panel and Molniya 1-81 on the bottom right panel.}
\label{FLI_Molniya}
\end{figure}

Figure~\ref{FLI_Molniya} shows the FLIs for increasing values of the semi--major axis, i.e. $a=13339.1$ km (top panels), $a=18851.7$ km (bottom left) and $a=26508.2$ km (bottom right). A color scale provides an indication of the behavior of the dynamics: dark colors (i.e., low FLIs)
denote a regular dynamics, while light colors (i.e., high FLIs) correspond to chaotic motions.

The role of the variation of $\Omega_k$ in generating the resonance structure, which is responsible for chaotic motions on timescales of order of tens to hundreds of years, is illustrated graphically in the top panels of Figure~\ref{FLI_Molniya}. These plots are obtained for the following parameters, which have been chosen with the intention to apply our study to Molniya satellites (see Table~\ref{Molniya_elements}): $a= 13339.1 $ km, $\Omega=236.07^\circ$ and $H= 0.222$;
%\marginpar{Why did you choose $\Omega=236.07^\circ$ and $H= 0.222$?}
the difference between the two top plots is related to the model used in computing the FLI values. Precisely, in Figure~\ref{FLI_Molniya} top left, $\Omega_k$ is considered constant, while Figure~\ref{FLI_Molniya} top right is obtained for the non--autonomous, two degrees--of--freedom Hamiltonian \eqref{Hamiltonian}. Clearly, the top left panel represents a pendulum--like plot: the stable points are located at $\omega=90^\circ$ and $\omega=270^\circ$, whereas the unstable ones are placed at $\omega=0^\circ$ and $\omega=180^\circ$. Responsible for the existence of the resonant island is, in fact, a single term of the expansion, obtained by combining all resonant terms whose harmonic angles are of the form  $2 \omega\pm s \Omega_k$ with $s=0,1,2$.

When the full model is considered, then the separatrix is filled by chaotic regions (Figure~\ref{FLI_Molniya}, top right panel), revealing thus the interaction between the resonant term having the harmonic angle $2 \omega$ and other terms of the expansion. In fact, as it can be seen from Figure~\ref{WEB_structure} top right, for $H=0.222$ the exact resonance is located at about $G=0.498$ (the intersection between the thick red curve and the vertical dashed black line) and close to it there are just resonances stemming from the same multiplet. A detailed inspection of the expansions of  $\overline{\R}_{Moon}$ and $\overline{\R}_{Sun}$ reveals that the resonant terms, let us call them $\mathfrak{T}_s^{\dot{\omega}=0}$, responsible for the {\it whole resonance} $\dot{\omega}=0$ have the form $\mathfrak{T}_s^{\dot{\omega}=0}=C_s a^2 e^2 (1-\cos^2 I) \cos(2 \omega +s \Omega_k)$, $s=-2,-1,0,1,2$, where $C_s$ depends on the lunar and solar elements. Moreover, it can be seen that $|C_0|$ is at least 20 times greater than $|C_{-1}|$ and $|C_{1}|$, and more than 200 times larger than $|C_{-2}|$ and $|C_{2}|$. As a result, the width of the resonant island associated to $\mathfrak{T}_0^{\dot{\omega}=0}$ is much larger than the width of the other islands. As it can be seen in Figure~\ref{WEB_structure} top right panel, the dashed black line intersects (in order from bottom to top) three exact resonances:
%\marginpar{I do not see where the dashed black line intersects the three exact resonances}
$2\dot{\omega}+\dot{\Omega}_k=0$ at approximately $G=0.485$,
%\marginpar{Here it is $\dot{\Omega}_k$ or $\dot{\Omega}$?}
$\dot{\omega}=0$ at $G=0.495$ and $2\dot{\omega}-\dot{\Omega}_k=0$ at about $G=0.52$. The structure shown in Figure~\ref{FLI_Molniya} top right panel is the result of the overlapping of these three exact resonances.

\begin{table}[h]
\begin{tabular}{|l|}
  \hline
\footnotesize{\texttt{MOLNIYA 1-81}}\\
\footnotesize{\texttt{1\ 21426U\ 91043A\ \ \ 15256.55204240\ -.00000042\ \ 00000-0\ -18490-1\ 0\ \ 9994}}\\
\footnotesize{\texttt{2\ 21426\ \ 63.3807\ 270.2557\ 7154024\ 283.9028\ 344.3128\ \ 2.00606557177626}}\\
\hline
\footnotesize{\texttt{MOLNIYA 1-88}}\\
\footnotesize{\texttt{1\ 23420U\ 94081A\ \ \ 15255.51665308\ \ .00000061\ \ 00000-0\ \ 41968-4\ 0\ \ 9994}}\\
\footnotesize{\texttt{2\ 23420\ \ 62.8537\ 100.6611\ 6341703\ 297.1923\ \ 12.8801\ \ 3.34494686164329}}\\
\hline
\footnotesize{\texttt{MOLNIYA 1-86}}\\
\footnotesize{\texttt{1\ 22671U\ 93035A\ \ \ 15256.86281272\ \ .00000717\ \ 00000-0\ \ 28094-3\ 0\ \ 9994}}\\
\footnotesize{\texttt{2\ 22671\ \ 62.9189\ 236.0661\ 4962239\ 325.8722\ 222.6630\ \ 5.61987431213385}}\\
 \hline
  \end{tabular}
 \vskip.2in
 \caption{Two Line Element set (\citet{citeTLE}) for the following satellites:  Molniya 1-81, Molniya 1-88, Molniya 1-86, during the middle of September  2015 (see columns 19-32 of the first line of the each set for the exact Epoch). Data are taken from \cite{citeTLE}.}\label{TLE}
\end{table}

\begin{table}[h]
\begin{tabular}{|c|c|c|c|c|c|c|c|c|}
\hline
Satellite & a (in km) & $e$ & $I$ & $\Omega$ & $\omega $ & $L$ & $G$ & $ H $\\
   \hline
   \hline
   Molniya 1-81 & $26508.2$ & $0.7154$ & $63.38$ & $270.26$ & $283.90 $ & $0.793$ & $0.554$ & $ 0.248 $\\
   \hline
   Molniya 1-88 & 18851.7 & $0.6342$ & $62.85$ & $100.66$ & $297.19 $ & $0.669$ & $0.517$ & $ 0.236 $\\
   \hline
   Molniya 1-86 & 13339.1 & $0.4962$ & $62.92$ & $236.07$ & $325.87 $ & $0.562$ & $0.488$ & $ 0.222 $\\
   \hline
   \end{tabular}
 \vskip.2in
 \caption{The orbital elements and the corresponding action variables for the following satellites:  Molniya 1-81, Molniya 1-88, Molniya 1-86. For each satellite, the values are determined at the corresponding Epoch specified in Table~\ref{TLE}. The angles $I$, $\Omega$ and $\omega$ are expressed in degrees, while the units of length and time used to compute the actions $L$, $G$, $H$ are described in the text.
Data are taken from \cite{citeTLE}.}\label{Molniya_elements}
\end{table}

Increasing the value of the semi--major axis, the resonant curves in Figure~\ref{WEB_structure} bottom panels become increasingly scattered, their intersections suggesting a more intricate dynamics. In the bottom plots of Figure~\ref{WEB_structure}, the vertical dashed black line crosses more resonant curves than in the top right panel of Figure~\ref{WEB_structure}. In particular it also intersects the curves associated to the {\it exact resonances} $2\dot{\omega} -\dot{\Omega}=0$ and $2\dot{\omega} -\dot{\Omega}+\dot{\Omega}_k=0$. An estimation of the magnitude of the resonant terms shows that the width of these two {\it exact resonances} is much larger than the width of any other resonance intersected by the dashed vertical line, excluding of course the exact resonance $\dot{\omega}=0$. In the bottom right panel of Figure~\ref{FLI_Molniya}, the {\it exact resonances} $\dot{\omega}=0$, $2\dot{\omega} -\dot{\Omega}=0$ and $2\dot{\omega} -\dot{\Omega}+\dot{\Omega}_k=0$ overlap, whereas in Figure~\ref{FLI_Molniya} bottom left they do not interact. The regular and chaotic regions in the bottom panels of Figure~\ref{FLI_Molniya} are the consequence of the complex structure of these resonances.

\begin{figure}[h]
\centering
\vglue0.1cm
\hglue0.1cm
\includegraphics[width=5.5truecm,height=5truecm]{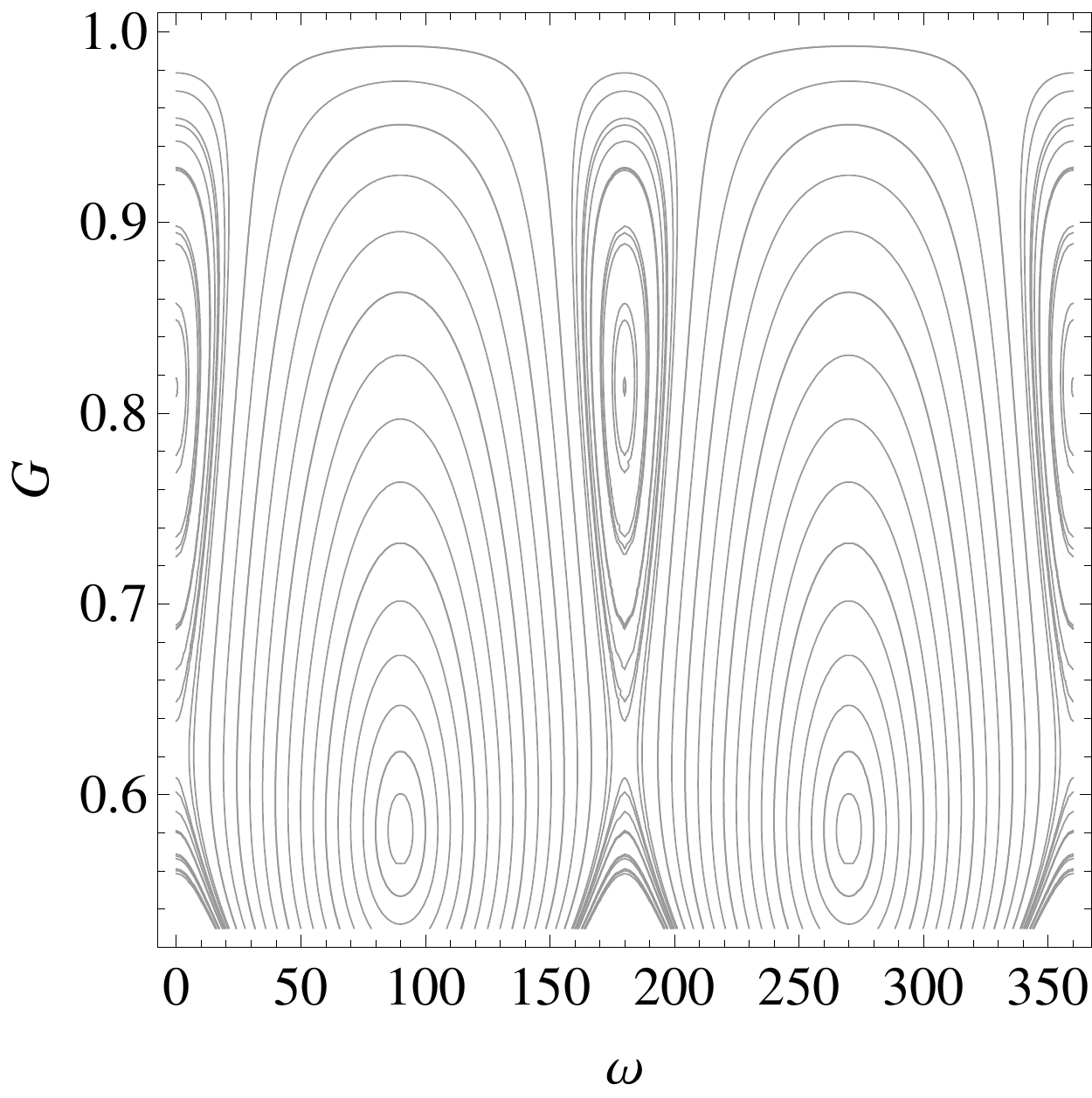}\\
\includegraphics[width=7.5truecm,height=6truecm]{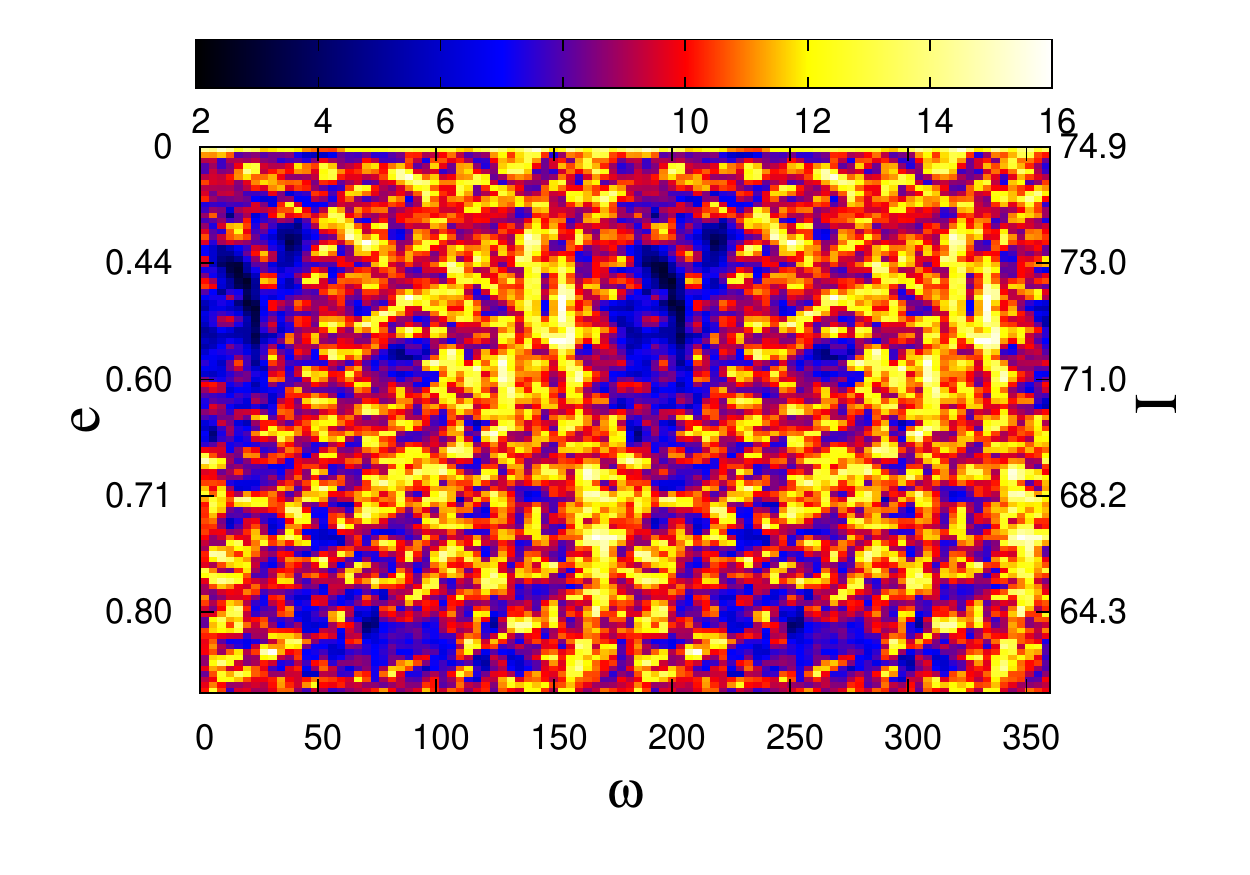}
\includegraphics[width=7.5truecm,height=6truecm]{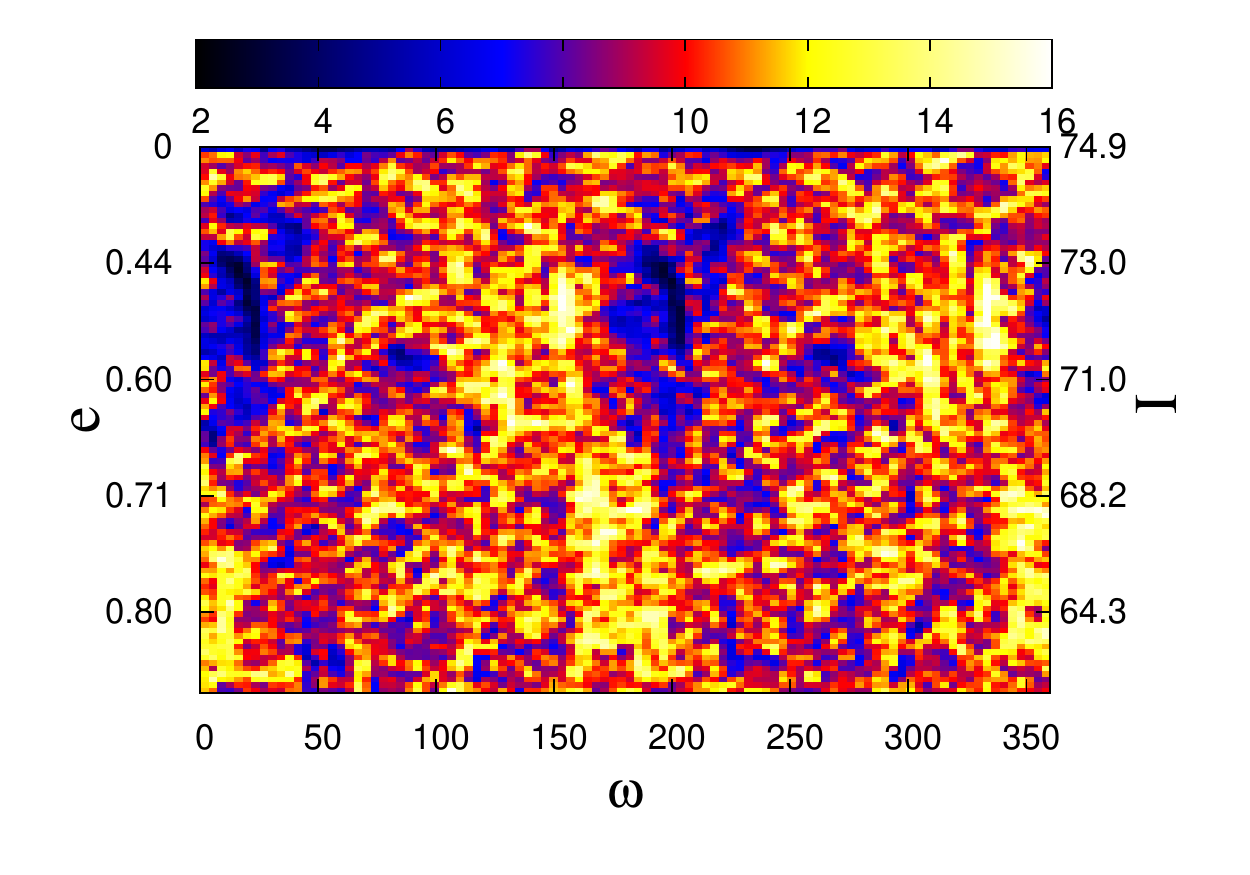}
\vglue0.5cm
\caption{The critical inclination resonance for $a=42164.17$ km and $H=0.26$.  Top: Phase portrait  for the integrable system obtained by averaging $\H$ over $\Omega$ and considering $\Omega_{k}$ as a constant.
Bottom:  FLI values for the full model \eqref{Hamiltonian}. In the bottom left
panel the series expansion in $a/a_k$ from \equ{eq:lane} is performed to the quadrupole
($l=2$) order, while in the bottom right plot, this series is taken to the octupole
($l = 3$) order.} The initial value of $\Omega$ is zero for the bottom panels.
\label{FLI_GEO}
\end{figure}

The green circles in Figure~\ref{FLI_Molniya} represent the satellites: Molniya 1-81 (bottom right), Molniya 1-88 (bottom left) and Molniya 1-86 (top right). Table~\ref{TLE} contains the Two Line Element set for these satellites, while Table~\ref{Molniya_elements} gives their corresponding orbital elements and action variables. From \citet{citeTLE}, during the middle of September 2015, the mean motion of almost all Molniya satellites is around 2 revolutions per day (or equivalently the semi--major axis is about $26\,000$ km), excluding the satellites Molniya 1-88, Molniya 1-86, Molniya 1-S (placed in a geosynchronous orbit whose inclination is $I=10^\circ$) and Molniya 1-44 (located at the edge of the LEO region). Moreover, the argument of perigee of these satellites is close to $270^\circ$; therefore, they are contained in a stable region that guarantees small excursions in eccentricity and $\omega$ (Figure~\ref{FLI_Molniya} bottom right). Molniya 1-88 is also in a regular region (Figure~\ref{FLI_Molniya} bottom left), and although Molniya 1-86 is in a chaotic zone, it cannot collide with the Earth since the resonant island is located above the line $G=G_{min}$ (Figure~\ref{FLI_Molniya} top right).

For larger semi--major axes, let us say for the GEO region, the problem increases in complexity. First of all, we can no longer approximate $\dot{\omega}$, $\dot{\Omega}$  by the simple relations \eqref{omega12} because the magnitude of the perturbing forces due to Moon and Sun becomes comparable with that due to $J_2$. Thus, in obtaining the network of resonances, one should include also the effects induced by Moon and Sun. Moreover, a bifurcation phenomenon enters into the scene, showing that there are some cases when a specific resonance cannot be modeled by a pendulum--type system, but one should use a so--called \sl extended fundamental model, \rm as described in \citet{sB01}.

Indeed, let us consider the integrable Hamiltonian system, obtained by averaging \eqref{Hamiltonian} over $\Omega$ and by considering $\Omega_k$ as a constant. Roughly speaking, we have isolated the resonance $\dot{\omega}=0$ from any other resonance. For such a reduced one degree--of--freedom Hamiltonian, the phase portrait is given in Figure~\ref{FLI_GEO} top panel. This plot, obtained for $a=42\,164.17$ km and $H=0.26$, shows the existence of some additional elliptic equilibrium points at about $G=0.82$ for $\omega=0^\circ, 180^\circ$. We notice that these equilibrium points are not revealed in the previous studies concerning the critical inclination resonance in the GEO region \citep[see][]{fDaM93}. The reason is related to the use of different values for the energy level. For instance, if we draw the phase portrait for, let us say $H=0.3$, then we obtain just the equilibrium points of the primary resonance.

Figure~\ref{FLI_GEO} bottom left is obtained for the Hamiltonian \eqref{Hamiltonian} (the lunar and solar expansions are taken up to $l=2$). Although the overlapping of resonances leads to chaotic motions, we notice that some stable regions remain  around each equilibrium stable point.

So far, we have restricted our analysis to the quadrupolar gravitational interactions, namely we expanded the lunisolar disturbing function up to the second order ($l=2$) in the ratio of semi--major axes. For the Sun, the effect of the third harmonic (parallactic, or octupole) term is negligible, while for the Moon the contribution of the third harmonic term becomes more important as the orbital radius increases.
Neglecting this term sets an upper limit to the radius of the satellite orbit for which the preceding theory is valid \citep{mL63}. In considering the long--range stability at GEO (where $a/a_k \sim 0.1$), \citet{pM61} advises to take the parallactic term into consideration, and, in fact, \citet{yLsN11} found that interesting dynamical behaviors can occur for the Kozai--Lidov cycles in exoplanetary studies under octupole--order secular interactions. While in the case including the octupole, the form of the perturbations cannot be easily expressed analytically, because the terms depending upon the position of the Moon's perigee will appear, we can provide a numerical analysis.

Figure~\ref{FLI_GEO} bottom right shows the FLI values at the octupole level of approximation, which, when compared to the bottom left panel of Figure~\ref{FLI_GEO}, lead us to conclude that the effects of $(a/a_k)^3$ terms, and the corresponding higher--order secular resonances stemming from the lunar perigee,
are indeed important at GEO, at least for large eccentricities.

% --------------------------------------------------------------------------------------------------------------------------------------
%          CONCLUSIONS
% --------------------------------------------------------------------------------------------------------------------------------------
\section{Conclusions}\label{sec:conclusions}

As it often happens in Celestial Mechanics, analytical and numerical methods are complementary, and should both be used to get a detailed description of the dynamics. As far as artificial satellite or space debris dynamics is concerned, the main gravitational effects are due to the geopotential, the influence of the Moon and the attraction of the Sun. It is therefore mandatory to have a correct expansion of the lunisolar potential in terms of the appropriate elements. Such expansion is provided in Proposition~\ref{pro:Lane}, whose formula can be implemented on an algebraic manipulator to compute a series expansion to a finite order in the parameters of the model (typically, the eccentricity and the ratio of the semi--major axes).

The advantage of having an explicit analytical expansion of the lunisolar potential is clarified by a concrete application concerning lunisolar secular resonances for the Molniya satellites, which move on orbits with high eccentricity and inclination. As shown in Section~\ref{sec:averages}, the analytical expansion gives a powerful tool to investigate the dynamics, especially when looking at the interaction between resonant motions. In fact, the analytical expansion provides a straightforward way to compute the resonance relations and, in particular, the interaction between different resonances.

The explicit series development also allows us to highlight the role of each component of the multiplet associated to a specific resonance. The analytical approach yields information about the dynamics and, precisely, on the chaotic behavior generated by the overlapping of resonances. The numerical results provided by the computation of the FLIs gives a global view of the dynamics, whose fine structure was obtained by analyzing the lunisolar expansion. We remark that this is just one of the many applications which can be obtained as a byproduct of the analytical expansion of the lunisolar potential and it is given as a possible motivation of the present work.

% --------------------------------------------------------------------------------------------------------------------------------------
%          APPENDICES
% --------------------------------------------------------------------------------------------------------------------------------------
\section*{Appendix}

\begin{appendices}

%{\footnotesize {\color{red}
%\section*{notes}
%\begin{itemize}
%    \item present all relevant details on the special functions appearing in the expansions
%    \item discuss differences between the various inclination functions
%\end{itemize}
%}}

%  HANSEN'S COEFFICIENTS
\section{Hansen's coefficients}
\label{app:Hansen}

The Hansen coefficients $H_{lpq}(e)=X_{l-2p+q}^{l, l-2p}(e)$, $G_{lhj}(e^{*})=X_{l-2h+j}^{-(l+1), l-2h}(e^*)$ may be computed easily with an algebraic manipulator through the following formula \citep[see][]{mJ65, gG76}:

\beqano
X_{k}^{n,m}(e)&=&\biggl(1+\Bigl({e\over {1+\sqrt{1-e^2}}}\Bigr)^2\biggr)^{-n-1}\ \sum_{s=0}^{s_1} \ \sum_{t=0}^{t_1} \ \left(\begin{array}{c}
  n-m+1 \\
  s \\
 \end{array}\right)\
\left(\begin{array}{c}
  n+m+1 \\
  t \\
 \end{array}\right) \nonumber\\
&&\times \Bigl(-{e\over {1+\sqrt{1-e^2}}}\Bigr)^{s+t}\ J_{k-m-s+t}( k e)\ ,
\eeqano
where $n$, $m$, $k$ are integers, $J_b$ denotes the Bessel function of the first kind and $s_1$, $t_1$ are defined by
\beqano
s_1=\left\{
\begin{array}{ll}
  n-m+1, & \quad \textrm{if\ }  n-m+1 \geq 0  \\
  \infty, &\quad \textrm{if\ }  n-m+1 < 0 \\
 \end{array} \right.\ , \qquad
 t_1=\left\{
\begin{array}{ll}
  n+m+1, & \quad \textrm{if\ }  n+m+1 \geq 0  \\
  \infty, &\quad \textrm{if\ }  n+m+1 < 0 \\
 \end{array} \right.\ .
\eeqano

%  INCLINATION FUNCTIONS
\section{Kaula's inclination functions}
\label{app:inclination_functions}

The Kaula's inclination function, denoted by $F_{lmp}(I)$, has the form \citep[see][]{wK62, wK66}:
\beqano
F_{lmp}(I)&=&\sum_{t=0}^{\min{\{p,[{{l-m}\over 2}]}\}} {{(2l-2t)!}\over {t!(l-t)!(l-m-2t)!2^{2l-2t}}} \sin^{l-m-2t}I\ \sum_{s=0}^m\left(\begin{array}{c}
  m \\
  s \\
 \end{array}\right)
 \cos^sI\nonumber\\
 &&\times \sum_c \left(\begin{array}{c}
  l-m-2t+s \\
  c \\
 \end{array}\right)
\left(\begin{array}{c}
  m-s \\
  p-t-c \\
 \end{array}\right)
 (-1)^{c-[{{l-m}\over 2}]}\ ,
\eeqano
where $[\cdot]$ denotes the integer part and $c$ is summed over all values for which the binomial coefficients are not zero.

\section{Proof of Lemma~\ref{lem:CS}}\label{app:lemCS}

We present the proof of Lemma~\ref{lem:CS}.

\vskip.1in

Let us first notice that the following relation holds:
\begin{align}
    \label{eq:Legendre_prop}
    P_l^{-s} (x) = (-1)^s \frac{(l - s)!}{(l + s)!} P_l^s (x)
\end{align}
and that $A_l^{m,s}=A_l^{m,-s}$.
From the definitions $C_l^m \equiv A_l^m \cos m \alpha^\prime$,
$S_l^m \equiv A_l^m \sin m \alpha^\prime$ and from Lemma~\ref{lem:plm}, one obtains:
\begin{align}
    C_l^m + i S_l^m
    \nonumber
    & = A_l^m \cos m \alpha^\prime + i A_l^m \sin m \alpha^\prime = A_l^m e^{i m \alpha^\prime} \\
    \nonumber
    & = \frac{\mathcal{G} m_k \epsilon_m (l - m)!}{a_k (l + m)!} \left( \frac{a}{a_k} \right)^l
        \left( \frac{r}{a} \right)^l \left( \frac{a_k}{r_k} \right)^{l+1} P_l^m
        \left( \sin \delta^\prime \right) e^{i m \alpha^\prime} \\
    \nonumber
    & = \frac{\mathcal{G} m_k \epsilon_m (l - m)!}{a_k (l + m)!} \left( \frac{a}{a_k} \right)^l
        \left( \frac{r}{a} \right)^l \left( \frac{a_k}{r_k} \right)^{l+1}
        \sum\limits_{s=-l}^l \Lambda_l^{m, s} P_l^s (\sin \delta_k) e^{i s \alpha_k} \\
    \nonumber
    & = \frac{\mathcal{G} m_k \epsilon_m}{a_k (l + m)!} \left( \frac{a}{a_k} \right)^l
        \left( \frac{r}{a} \right)^l \left( \frac{a_k}{r_k} \right)^{l+1}
        \sum\limits_{s=-l}^l P_l^s (\sin \delta_k) (l - s)! U_l^{m,s} e^{i m \pi/2}
        e^{i s (\alpha_k - \pi/2)} \\
    \label{eq:C_imag_S_tmp}
    & = (i)^m \sum\limits_{s=-l}^l (l - s)! \frac{A_l^{m,s}}{\epsilon_s} P_l^s (\sin \delta_k)
        U_l^{m,s} e^{i s (\alpha_k - \pi/2)},
\end{align}
where we used $e^{i m \pi /2} = (i)^m$. Setting $\beta \equiv
\alpha_k - \pi/2$, from \equ{eq:CSA} one has:
\begin{align*}
    C_l^{m,s} \cos (s \beta) + i S_l^{m,s} \sin (s \beta)
    & = \frac{1}{2} \left( U_l^{m,s} + (-1)^s U_l^{m,-s} \right) \frac{e^{i s \beta} + e^{-i s \beta}}{2}\\
    & + \frac{i}{2} \left( U_l^{m,s} - (-1)^s U_l^{m,-s} \right) \frac{e^{i s \beta} - e^{-i s \beta}}{2 i} \\
    & = \frac{1}{2} \left( U_l^{m,s} e^{i s \beta} + (-1)^s U_l^{m,-s} e^{-i s \beta} \right),
\end{align*}
and, due to \eqref{eq:Legendre_prop}, we obtain
\eqref{eq:C_imag_S}; hence, we can split \eqref{eq:C_imag_S} as
\begin{align*}
    C_l^m + i S_l^m
    & = i^m \sum\limits_{s = 0}^l (l - s)! \frac{A_l^{m,s}}{2} P_l^s (\sin \delta_k)
        \left( U_l^{m,s} e^{i s \beta} + (-1)^s U_l^{m,-s} e^{-i s \beta} \right) \\
    & = i^m \sum\limits_{s = -l}^l \frac{A_l^{m,s}}{\epsilon_s} (l - s)! P_l^s (\sin \delta_k)
        U_l^{m,s} e^{i s \beta}\ ,
\end{align*}
which coincides with \eqref{eq:C_imag_S_tmp} above.

\vskip.1in

%\begin{remark}
%I used a different definition of $A_l^{m,s}$ without the factor
%$(l - s)!$, which, nevertheless, appears in \eqref{eq:C_imag_S} as
%a multiplicative factor. This change is necessary in order to use
%\eqref{eq:Legendre_prop}.
%\end{remark}

\begin{remark}
Due to the fact that
\begin{align*}
    i^m = \left\{ \begin{array}{cl} (-1)^{m/2} & \text{$m$ even}\ ,
        \\[0.3em] i (-1)^{(m-1)/2} & \text{$m$ odd}\ , \end{array} \right.
\end{align*}
 we obtain:
 \begin{align}
    (-1)^{k_1} \sum\limits_{s = 0}^l (l - s)! A_l^{m,s} P_l^s
    (\sin \delta_k) C_l^{m,s} \cos (s (\alpha_k - \pi/2))
    & = \left\{ \begin{array}{cl} C_l^m & \text{$m$ even}\ , \\[0.3em] S_l^m & \text{$m$ odd}\ , \end{array}\right. \nonumber\\
    (-1)^{k_1} \sum\limits_{s = 0}^l (l - s)! A_l^{m,s} P_l^s
    (\sin \delta_k) S_l^{m,s} \sin (s (\alpha_k - \pi/2))
    & = \left\{ \begin{array}{cl} S_l^m & \text{$m$ even}\ , \\[0.3em] -C_l^m & \text{$m$ odd} \end{array} \right.\nonumber
 \end{align}
with $k_1 = [{m/2}]$.
\end{remark}

\section{Proof of Lemma~\ref{lem:theta}}\label{app:Lemma7}

We present the proof of Lemma~\ref{lem:theta}.

\vskip.1in

From \citet[p. 34, equation (3.61)]{wK66}, we have that
\begin{align}
    P_l^s (\sin \delta_k) \cos (s (\alpha_k - \pi/2))
    = (-1)^s \sum\limits_{q = o}^l F_{lsq} (I_k)\times\ \left\{ \begin{array}{cl}
        \cos \theta_{lsq}^\prime & l - s\ even\ , \nonumber\\[0.3em]
        \sin \theta_{lsq}^\prime & l - s\ odd\ . \end{array} \right.\nonumber
\end{align}
Therefore, we obtain:
\begin{align}
(-1)^{k_1} \sum\limits_{s = 0}^l (-1)^s\ (l - s)! A_l^{m,s} \sum\limits_{q = 0}^l F_{lsq} (I_k)\times\
    \left\{ \begin{array}{cl} C_l^{m,s} \cos \theta_{lsq}^\prime & \text{$l - s$ even} \nonumber\\[0.3em]
    C_l^{m,s} \sin \theta_{lsq}^\prime & l - s\ odd \end{array} \right. =
    \left\{ \begin{array}{cl} C_l^m & m\ even\ , \\[0.3em] S_l^m & m\ odd\ , \end{array} \right.\nonumber
\end{align}
\begin{align}
   (-1)^{k_1} \sum\limits_{s = 0}^l (-1)^s\  (l - s)! A_l^{m,s} \sum\limits_{q = 0}^l F_{lsq} (I_k)\times\
    \left\{ \begin{array}{cl} S_l^{m,s} \sin \theta_{lsq}^\prime & \text{$l - s$ even} \nonumber\\[0.3em]
    -S_l^{m,s} \cos \theta_{lsq}^\prime & l - s\ odd \end{array} \right. =
    \left\{ \begin{array}{cl} S_l^m & m\ even\ , \\[0.3em] -C_l^m & m\ odd\ . \end{array} \right.\nonumber
\end{align}

We proceed to prove \equ{eq:Theta_even}.

$ $ \newline Case $m$ even, $l - m$ even, $l - s$ even. From
\eqref{eq:kaula} we have
\begin{align*}
    \mathcal{R}_k
    & = \sum\limits_{l \geq 2} \sum\limits_{m = 0}^l (-1)^m\ \sum\limits_{p = 0}^l F_{lmp} (I)
    \left( C_l^m \cos \theta_{lmp} + S_l^m \sin \theta_{lmp} \right) \\
    & = \sum\limits_{l \geq 2} \sum\limits_{m = 0}^l \sum\limits_{p = 0}^l \sum\limits_{s = 0}^l
        \sum\limits_{q = 0}^l (-1)^{m+s}\ \Big[ (-1)^{k_1} A_l^{m, s} (l - s)! F_{lmp} (I) F_{lsq} (I_k)\\
        & \hspace{12pt} \times \left( C_l^{m, s} \cos \theta_{lmp} \cos \theta_{lsq}^\prime
        + S_l^{m, s} \sin \theta_{lmp} \sin \theta_{lsq}^\prime \right) \Big] \\
    & = \sum\limits_{l \geq 2} \sum\limits_{m = 0}^l \sum\limits_{p = 0}^l \sum\limits_{s = 0}^l
        \sum\limits_{q = 0}^l (-1)^{m+s}\ \left[ (-1)^{k_1} A_l^{m, s} (l - s)! F_{lmp} (I) F_{lsq} (I_k) \right. \\
    & \hspace{12pt} \left. \times \frac{1}{2} \left( U_l^{m, s} \cos \left( \theta_{lmp} - \theta_{lsq}^\prime \right)
        + (-1)^s U_l^{m, -s} \cos \left( \theta_{lmp} + \theta_{lsq}^\prime \right) \right) \right]\ .
\end{align*}
Case $m$ even, $l - m$ even, $l - s$ odd:
\begin{align*}
    \mathcal{R}_k
    & = \sum\limits_{l \geq 2} \sum\limits_{m = 0}^l \sum\limits_{p = 0}^l (-1)^m\ F_{lmp} (I)
        \left( C_l^m \cos \theta_{lmp} + S_l^m \sin \theta_{lmp} \right) \\
    & = \sum\limits_{l \geq 2} \sum\limits_{m = 0}^l \sum\limits_{p = 0}^l \sum\limits_{s = 0}^l
        \sum\limits_{q = 0}^l (-1)^{m+s}\ \Big[ (-1)^{k_1} A_l^{m, s} (l - s)! F_{lmp} (I) F_{lsq} (I_k) \\
    & \hspace{12pt} \times \Big( C_l^{m, s} \sin \theta_{lsq}^\prime \cos \theta_{lmp}
        - S_l^{m, s} \cos \theta_{lsq}^\prime \sin \theta_{lmp} \Big) \Big] \\
    & = \sum\limits_{l \geq 2} \sum\limits_{m = 0}^l \sum\limits_{p = 0}^l \sum\limits_{s = 0}^l
        \sum\limits_{q = 0}^l (-1)^{m+s}\ \left[ (-1)^{k_1} A_l^{m, s} (l - s)! F_{lmp} (I) F_{lsq} (I_k) \right. \\
    & \hspace{12pt} \left. \times \frac{1}{2} \left( -U_l^{m, s} \sin \left( \theta_{lmp} - \theta_{lsq}^\prime \right)
        + (-1)^s U_l^{m, -s} \sin \left( \theta_{lmp} + \theta_{lsq}^\prime \right) \right) \right]\ .
\end{align*}

The proof of \eqref{eq:Theta_even_odd} is given as follows.

$ $ \newline Case $m$ even, $l - m$ odd:
\begin{align*}
    \mathcal{R}_k
    \nonumber
    & = \sum\limits_{l \geq 2} \sum\limits_{m = 0}^l \sum\limits_{p = 0}^l (-1)^m\ F_{lmp} (I)
        \left( -S_l^m \cos \theta_{lmp} + C_l^m \sin \theta_{lmp} \right) \\
    & = \sum\limits_{l \geq 2} \sum\limits_{m = 0}^l \sum\limits_{p = 0}^l \sum\limits_{s = 0}^l
        \sum\limits_{q = 0}^l (-1)^{m+s}\ (-1)^{k_1} A_l^{m, s} (l - s)! F_{lmp} (I) F_{lsq} (I_k) \\
    & \hspace{12pt} \times
        \left\{ \begin{array}{cl}
            -S_l^{m, s} \cos \theta_{lmp} \sin \theta_{lsq}^\prime
            + C_l^{m, s} \sin \theta_{lmp} \cos \theta_{lsq}^\prime &l - s\ even \\[0.5em]
            S_l^{m, s} \cos \theta_{lmp} \cos \theta_{lsq}^\prime
            + C_l^{m, s} \sin \theta_{lmp} \sin \theta_{lsq}^\prime & l - s\ odd
        \end{array} \right. \\
    & =\sum\limits_{l \geq 2} \sum\limits_{m = 0}^l \sum\limits_{p = 0}^l \sum\limits_{s = 0}^l
        \sum\limits_{q = 0}^l (-1)^{m+s}\  (-1)^{k_1} A_l^{m, s} (l - s)! F_{lmp} (I) F_{lsq} (I_k) \\
    & \hspace{12pt} \times
        \left\{ \begin{array}{cl}
            \frac{1}{2} \left[ U_l^{m, s} \sin \left( \theta_{lmp} - \theta_{lsq}^\prime \right)
            + (-1)^s U_l^{m, -s} \sin \left( \theta_{lmp} + \theta_{lsq}^\prime \right) \right]
            & l -s\ even \\[0.5em]
            \frac{1}{2} \left[ U_l^{m, s} \cos \left( \theta_{lmp} - \theta_{lsq}^\prime \right)
            - (-1)^s U_l^{m, -s} \cos \left( \theta_{lmp} + \theta_{lsq}^\prime \right) \right]
            & l -s\ odd\ .
        \end{array} \right.
\end{align*}
Case $m$ odd, $l - m$ even:
\begin{align*}
    \mathcal{R}_k
    & = \sum\limits_{l \geq 2} \sum\limits_{m = 0}^l \sum\limits_{p = 0}^l (-1)^m\ F_{lmp} (I)
        \left( C_l^m \cos \theta_{lmp} + S_l^m \sin \theta_{lmp} \right) \\
    & = \sum\limits_{l \geq 2} \sum\limits_{m = 0}^l \sum\limits_{p = 0}^l \sum\limits_{s = 0}^l
        \sum\limits_{q = 0}^l (-1)^{m+s}\ (-1)^{k_1} A_l^{m, s} (l - s)! F_{lmp} (I) F_{lsq} (I_k) \\
    & \hspace{12pt} \times
        \left\{ \begin{array}{cl}
            -S_l^{m, s} \sin \theta_{lsq}^\prime \cos \theta_{lmp}
            + C_l^{m, s} \cos \theta_{lsq}^\prime \sin \theta_{lmp} & l -s\ even \\[0.5em]
            S_l^{m, s} \cos \theta_{lsq}^\prime \cos \theta_{lmp}
            + C_l^{m, s} \sin \theta_{lsq}^\prime \sin \theta_{lmp} & l -s\ odd
        \end{array} \right. \\
    & = \sum\limits_{l \geq 2} \sum\limits_{m = 0}^l \sum\limits_{p = 0}^l \sum\limits_{s = 0}^l
        \sum\limits_{q = 0}^l (-1)^{m+s}\ (-1)^{k_1} A_l^{m, s} (l - s)! F_{lmp} (I) F_{lsq} (I_k) \\
    & \hspace{12pt} \times
        \left\{ \begin{array}{cl}
            \frac{1}{2} \left[ U_l^{m, s} \sin \left( \theta_{lmp} - \theta_{lsq}^\prime \right)
            + (-1)^s U_l^{m, -s} \sin \left( \theta_{lmp} + \theta_{lsq}^\prime \right) \right]
            & l -s\ even \\[0.5em]
            \frac{1}{2} \left[ U_l^{m, s} \cos \left( \theta_{lmp} - \theta_{lsq}^\prime \right)
            - (-1)^s U_l^{m, -s} \cos \left( \theta_{lmp} + \theta_{lsq}^\prime \right) \right]
            & l -s\ odd
        \end{array} \right.
\end{align*}

The proof of \eqref{eq:Theta_odd} is given by
\begin{align*}
    \mathcal{R}_k
    & = \sum\limits_{l \geq 2} \sum\limits_{m = 0}^l \sum\limits_{p = 0}^l (-1)^m\ F_{lmp} (I)
        \left( -S_l^m \cos \theta_{lmp} + C_l^m \sin \theta_{lmp} \right) \\
    & = \sum\limits_{l \geq 2} \sum\limits_{m = 0}^l \sum\limits_{p = 0}^l \sum\limits_{s = 0}^l
        \sum\limits_{q = 0}^l (-1)^{m+s}\ (-1)^{k_1} A_l^{m, s} (l - s)! F_{lmp} (I) F_{lsq} (I_k) \\
    & \hspace{12pt} \times
        \left\{ \begin{array}{cl}
            -C_l^{m, s} \cos \theta_{lsq}^\prime \cos \theta_{lmp}
            - S_l^{m, s} \sin \theta_{lsq}^\prime \sin \theta_{lmp} & l -s\ even \\[0.5em]
            -C_l^{m, s} \sin \theta_{lsq}^\prime \cos \theta_{lmp}
            + S_l^{m, s} \cos \theta_{lsq}^\prime \sin \theta_{lmp} & l -s\ odd
        \end{array} \right. \\
    & = \sum\limits_{l \geq 2} \sum\limits_{m = 0}^l \sum\limits_{p = 0}^l \sum\limits_{s = 0}^l
        \sum\limits_{q = 0}^l (-1)^{m+s}\ (-1)^{k_1} A_l^{m, s} (l - s)! F_{lmp} (I) F_{lsq} (I_k) \\
    & \hspace{12pt} \times
        \left\{ \begin{array}{cl}
            \frac{1}{2} \left[ -U_l^{m, s} \cos \left( \theta_{lmp} - \theta_{lsq}^\prime \right)
            + (-1)^s U_l^{m, -s} \cos \left( \theta_{lmp} + \theta_{lsq}^\prime \right) \right]
            & l -s\ even \\[0.5em]
            \frac{1}{2} \left[ U_l^{m, s} \sin \left( \theta_{lmp} - \theta_{lsq}^\prime \right)
            - (-1)^s U_l^{m, -s} \sin \left( \theta_{lmp} + \theta_{lsq}^\prime \right) \right]
            & l -s\ odd\ .
        \end{array} \right.
\end{align*}
This concludes the proof.

\section{Proof of Lemma~\ref{lem:theta2}}\label{app:Lemma8}
We present the proof of Lemma~\ref{lem:theta2}.

The proof of \eqref{eq:Theta_unite} comes from a direct check of
the following cases.

\vskip.1in

If $m$ is even and $l - m$ is even, then $l$ is even and $l - 1$ is odd. Thus, $t = 1$ and:

$(i)$ if $l - s$ is even, then $s$ is also even, so that $k_2$ and $k_3$ are even, and $y_s = 0$;

$(ii)$ if $l - s$ is odd, then $s$ is also odd, so that $k_2$ and $k_3$ are odd, and $y_s = 1/2$.

Therefore, we arrive at the expression \eqref{eq:Theta_even} for $\Theta_{lmpsq}$, as required.

\vskip.1in

If $m$ is even and $l - m$ is odd or if $m$ is odd and $l - m$ is even, then it can easily be seen that $l - 1$ is even so that $t = 0$.
Consequently, $k_2 = 1$ and $k_3 = 0$, and:

$(i)$ if $l - s$ is even, then $s$ is odd giving $y_s = 1/2$;

$(ii)$ if $l - s$ is odd, then $s$ is even giving $y_s = 0$.

We then arrive at \eqref{eq:Theta_even_odd} for $\Theta_{lmpsq}$, as required.

\vskip.1in

If $m$ is odd and $l - m$ is odd, then $l$ is even and $l - 1$ is odd. Thus, $t = 1$ and:

$(i)$ if $l - s$ is even, then $s$ is also even, so that $k_2$ and $k_3$ are odd, and $y_s = 0$;

$(ii)$ if $l - s$ is odd, then $s$ is also odd, so that $k_2$ and $k_3$ are even, and $y_s = 1/2$.

Therefore, we arrive at the expression \eqref{eq:Theta_odd} for $\Theta_{lmpsq}$, as required.

This concludes the proof.

\section{Numerical validation of the lunisolar expansions}\label{app:orbit_comparison}

\begin{figure}[hpt]
\centering
\vglue0.1cm
\hglue0.1cm
\includegraphics[width=5truecm,height=4truecm]{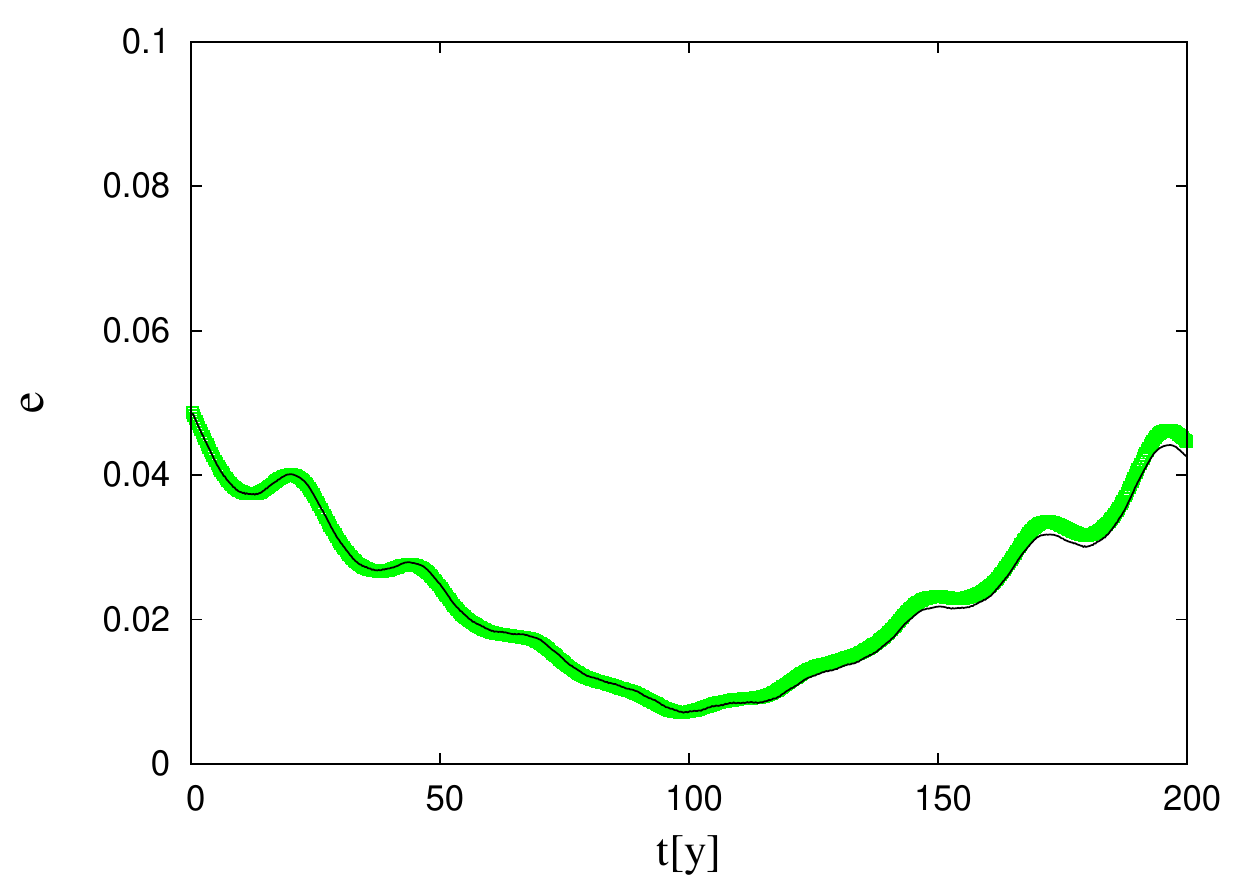}
\includegraphics[width=5truecm,height=4truecm]{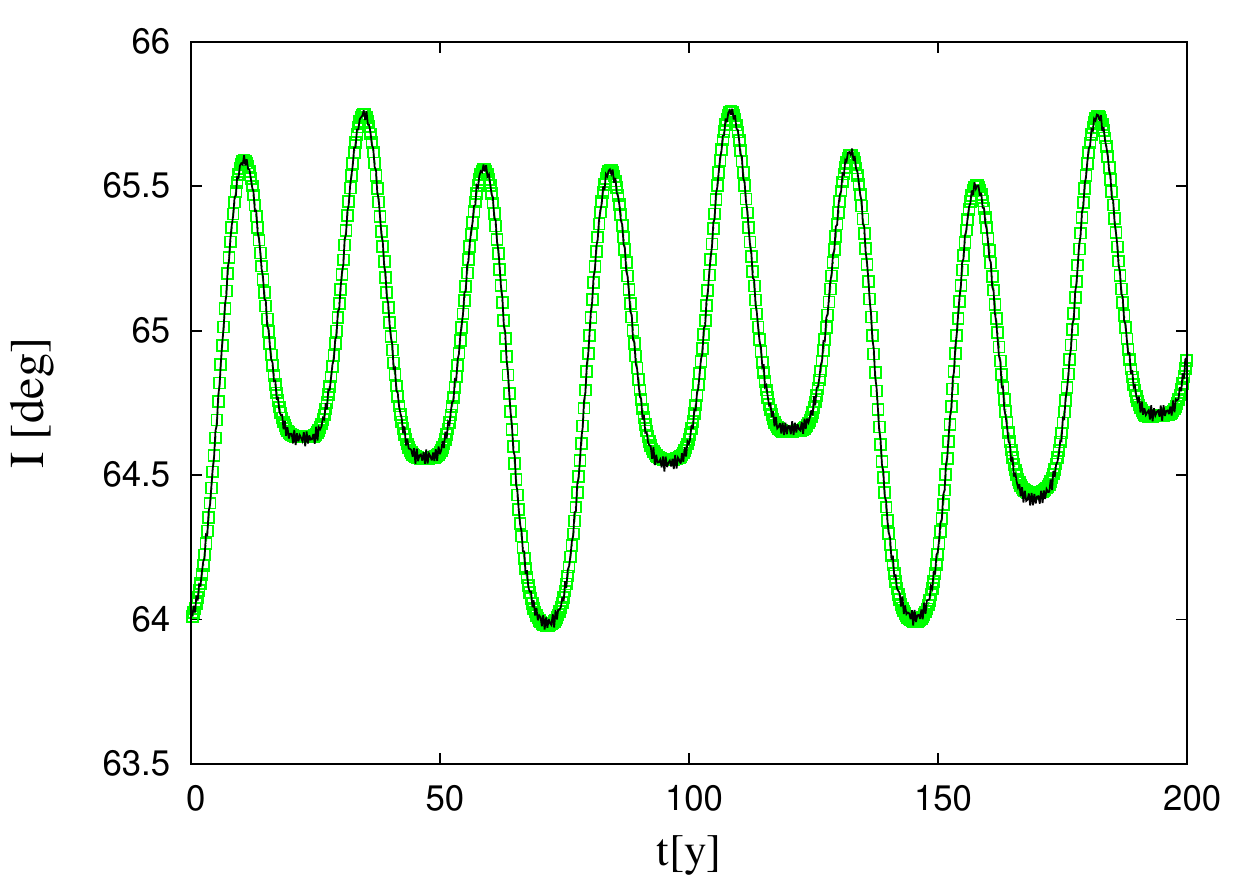}
\includegraphics[width=5truecm,height=4truecm]{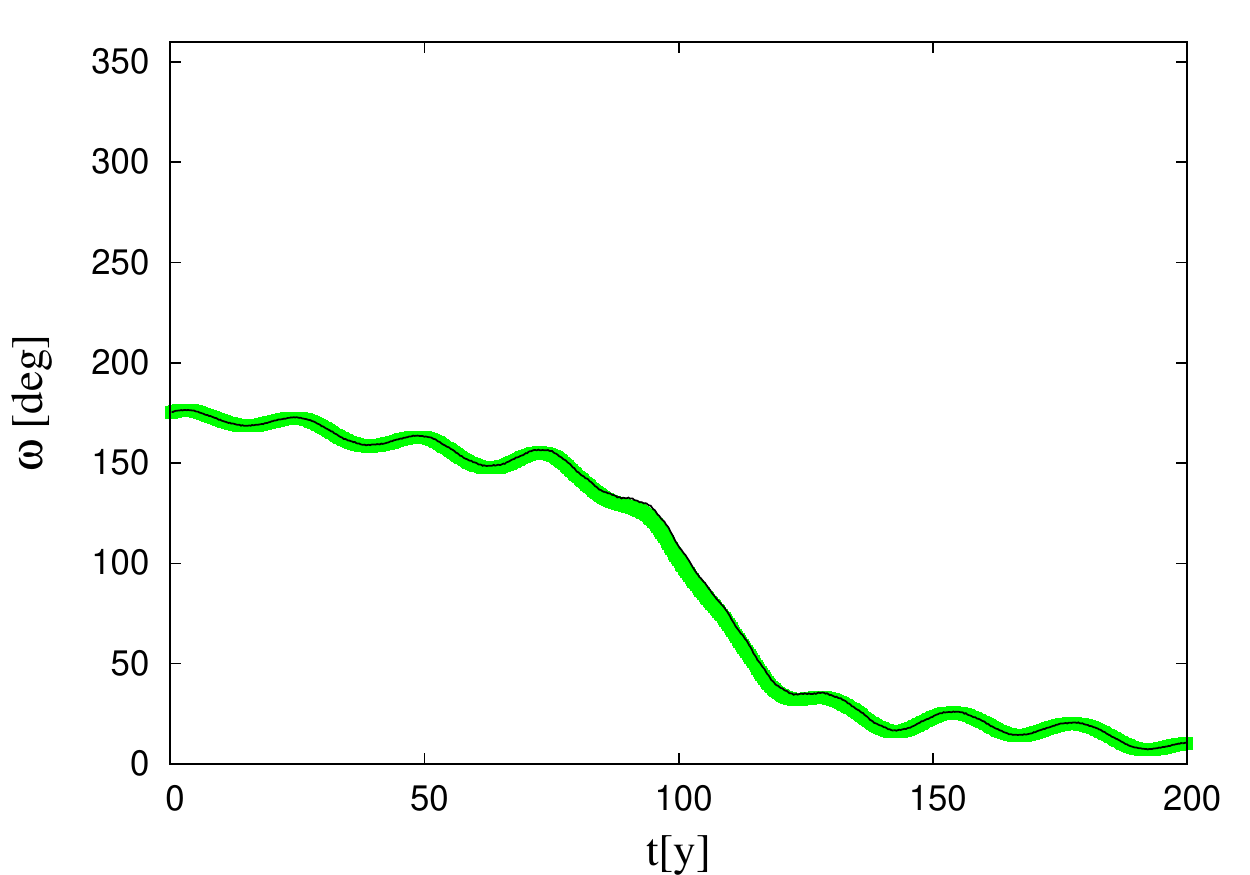}
\includegraphics[width=5truecm,height=4truecm]{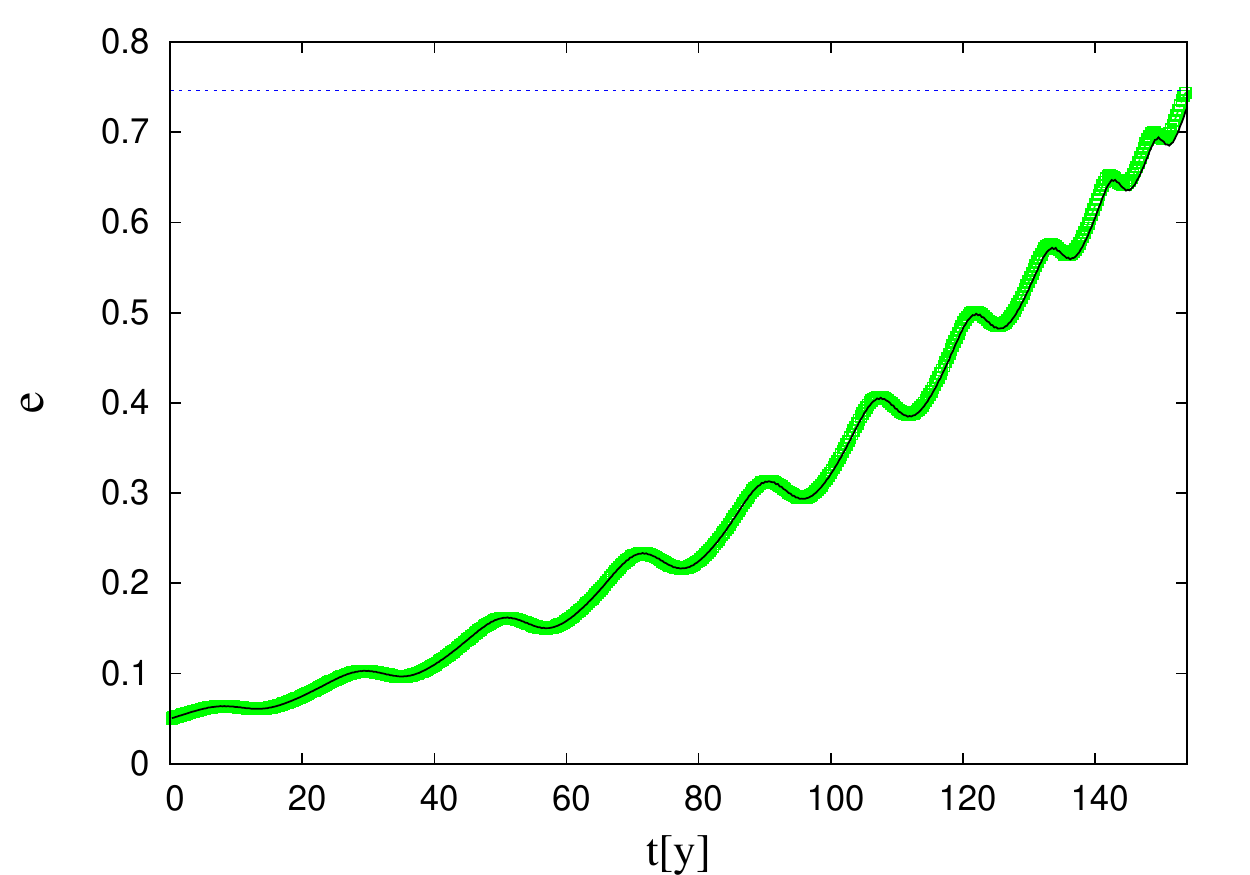}
\includegraphics[width=5truecm,height=4truecm]{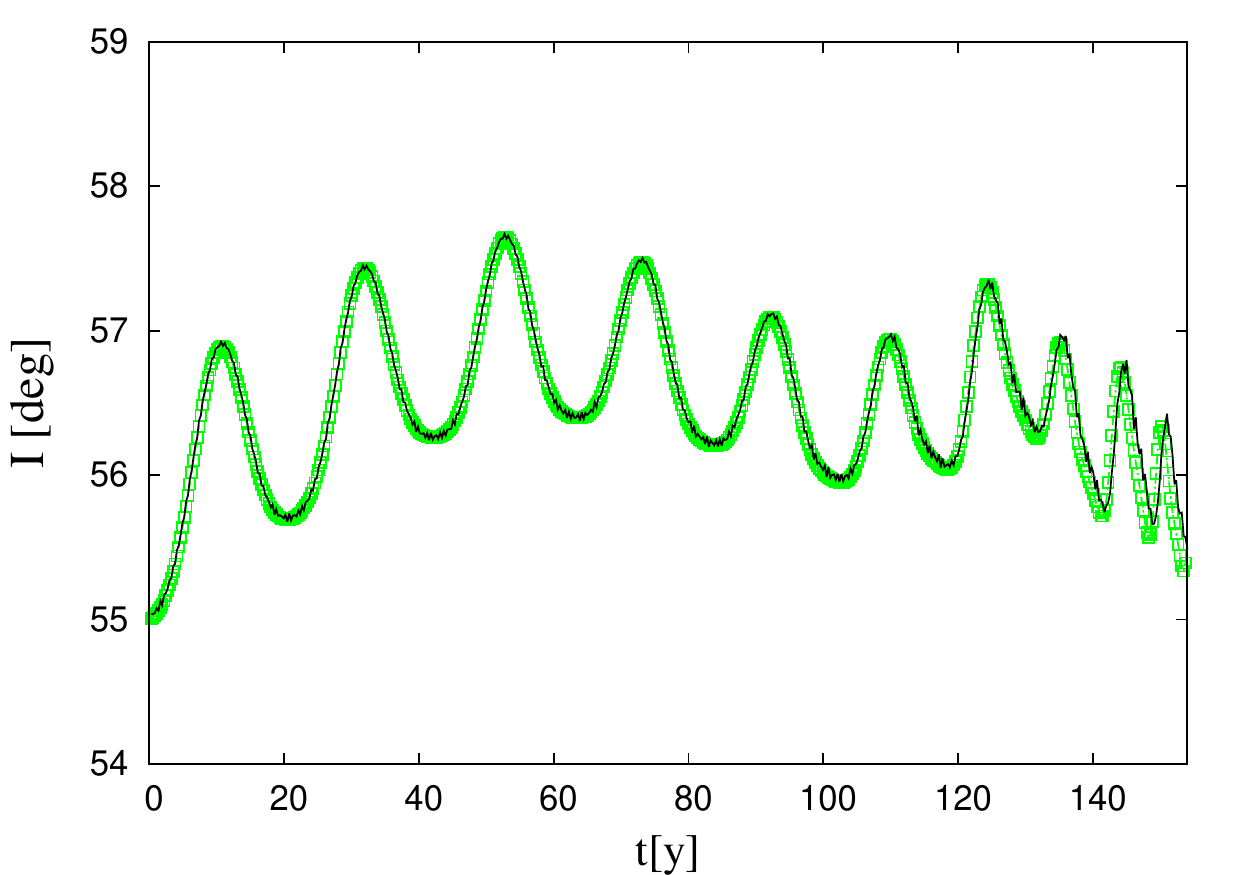}
\includegraphics[width=5truecm,height=4truecm]{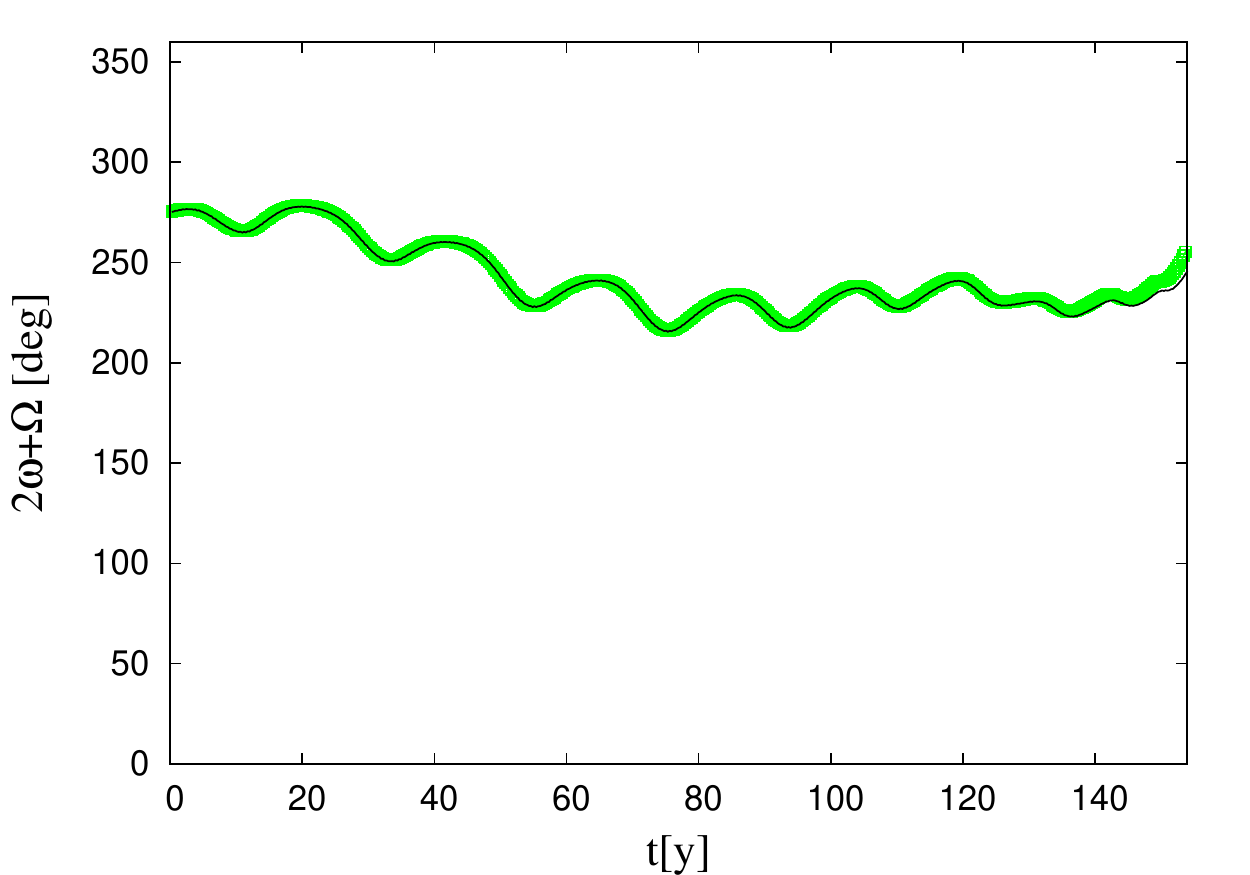}
\vglue0.6cm
\caption{Integration of an orbit within the libration region associated with
the critical inclination resonance (top panels) and an orbit inside
the resonance $2 \dot{\omega}+\dot{\Omega}=0$ (bottom plots).
We provide the plots for the eccentricity (left column), inclination
(middle column) and resonant angle (right column) as a function of time (in years).
The initial data are the following: $a=24293$ km, $e(0)=0.049$, $I(0)=64^{\circ}$,
$\omega(0)=175^\circ$, $\Omega(0)=150^\circ$ for the top panels, and $a=25271$ km, $e(0)=0.05$,
$I=55^\circ$, $\omega(0)=50^\circ$, $\Omega(0)=175^\circ$ for the bottom plots.
The initial epoch for both orbits is J2000. The green color (thicker line)
is used for the analytical model described in Section~\ref{sec:averages};
it includes the Earth's gravity harmonic $J_2$, the disturbing functions due to Sun and Moon, averaged
over both anomalies (of the satellite and the third body perturber) with
the expansion of the Moon taken up to degree $l=3$. The black color (thinner line)
corresponds to a Cartesian model, which includes the Earth's gravity harmonics up to degree
and order $2$, as well as the attraction of Sun and Moon (see \cite{CGmajor,CGminor}).
The horizontal line in the left bottom plot indicates the eccentricity value leading to re-entry.
}
\label{fig:two_orbits}
\end{figure}

To validate the lunisolar expansions, we compare in Figure~\ref{fig:two_orbits} the results obtained by using
a Newtonian (Cartesian) model and an analytical model. The Newtonian model includes the Earth's gravity harmonics
up to degree and order $2$, as well as the attraction of the Sun and Moon (see \cite{CGmajor,CGminor} for further details);
the analytical model is presented in Section~\ref{sec:averages}, based on the expansions \eqref{Rsun} and \equ{eq:lane}.
Osculating elements have been used for integrating the Newtonian model, while mean elements
have been used for the analytical model. Although we made several tests with
different dynamical conditions and different initial data, we present the
results for two orbits: one located inside a libration region corresponding to the critical inclination resonance
and the other placed inside a resonant island associated with the resonance $2 \dot{\omega}+\dot{\Omega}=0$.
The results show that for small and large eccentricities, as well as for different resonances,
the two approaches lead to similar results, thus yielding a further validation of the lunisolar expansions
presented in Section~\ref{sec:Kaula_development} and \ref{sec:Lane_development}.

\end{appendices}

\vskip.2in

% --------------------------------------------------------------------------------------------------------------------------------------
%          ACKNOWLEDGEMENTS
% --------------------------------------------------------------------------------------------------------------------------------------
\section*{Acknowledgements}
We are grateful to an anonymous referee for very helpful comments, which allowed us to improve
the content and the presentation.

A.C., G.P., A.R. are partially funded by the European Commission's Framework Programme 7, through
the Stardust Marie Curie Initial Training Network, FP7-PEOPLE-2012-ITN, Grant Agreement 317185.
A.C. was partially supported by PRIN-MIUR 2010JJ4KPA$\_$009 and GNFM/INdAM.
C.G. was supported by a grant of the Romanian National Authority for
Scientific Research and Innovation, CNCS - UEFISCDI, project number
PN-II-RU-TE-2014-4-0320 and by GNFM/INdAM. G.P. was partially supported by GNFM/INdAM.

% --------------------------------------------------------------------------------------------------------------------------------------
%          REFERENCES
% --------------------------------------------------------------------------------------------------------------------------------------
\bibliographystyle{spmpsci}

\end{document}